\newcommand{\bs}{\boldsymbol}
\renewcommand{\note}[1]{}
\renewcommand{\note}[1]{~\\\frame{\begin{minipage}[c]{\textwidth}\vspace{2pt}\center{#1}\vspace{2pt}\end{minipage}}\vspace{3pt}\\}
\newcommand{\etal}{\emph{et al.}}
\long\def\symbolfootnote[#1]#2{\begingroup%
\def\thefootnote{\fnsymbol{footnote}}\footnote[#1]{#2}\endgroup}
\newcommand{\One}{\mathbf{1}}
\newcommand{\Real}{\mathds{R}}
\title{Bilinear Games: Polynomial Time Algorithms for Rank Based Subclasses}
\author{Jugal Garg\inst{1}\thanks{Work done while the author was an intern at Microsoft Research India} \and Albert Xin Jiang\inst{2} \and Ruta Mehta\inst{1}}
\institute{Indian Institute of Technology, Bombay\\ \email{jugal,ruta@cse.iitb.ac.in}\and University of British Columbia\\ 
\email{jiang@cs.ubc.ca}}
\begin{document}
\maketitle

\begin{abstract}
Motivated by the sequence form formulation of Koller et al. \cite{KollerMegiddoVonStengel1996}, this paper defines {\em
bilinear games}, and proposes efficient algorithms for its rank based subclasses. 
{\em Bilinear games} are two-player non-cooperative single-shot games with compact polytopal strategy sets and two payoff
matrices $(A,B)$ such that when $(x,y)$ is the played strategy profile, the payoffs of the players are $x^TAy$ and $x^TBy$
respectively. 
We show that bilinear games are very general and capture many interesting classes of games like bimatrix games, two player Bayesian games,
polymatrix games, two-player extensive form games with perfect recall etc. as special cases, and hence are hard to solve in general. 

$\ \ \ \ $ Existence of a (symmetric) Nash equilibrium for (symmetric) bilinear games follow directly from the known results. 
For a given bilinear game, we define its {\em Best Response Polytopes} (BRPs) and characterize the set of Nash equilibria as {\em fully-labeled} pairs in the BRPs.
We consider a rank based hierarchy of bilinear games, where rank of a game $(A,B)$ is defined as $rank(A+B)$. 
In this paper, we give polynomial time algorithms to compute Nash equilibrium for special classes of bilinear games:
\begin{itemize}
\item Rank-$1$ games ({\em i.e.,} $rank(A+B)=1$).
\item FPTAS for constant rank games ({\em i.e.,} $rank(A+B)$ is constant).
\item When $rank(A)$ or $rank(B)$ is constant. This improves the results by Lipton et al. \cite{lipton2003simple} and
Kannan et al. \cite{kannan09rank}, for bimatrix games with low rank matrices.
\end{itemize}
\end{abstract}

\section{Introduction}
In the last decade, there has been much research at the interface of computer science and game theory (see e.g.
\cite{AGTBook,MAS}).  One fundamental class of computational problems in game theory is the computation of \emph{solution
concepts} of finite games. Nash \cite{Nash1951} proved that in any finite game there always exists a steady state, from
where no player gains by unilateral deviation. Such a steady state has since been named
Nash equilibrium (NE)  and is perhaps the most well-known and well-studied game-theoretic solution concept.
However, computing a Nash equilibrium is nontrivial, and indeed the recent series of papers
\cite{ChenDeng06,DaskalakisGoldbergPapa06,GoldbergPapa06} established that the problem is PPAD-complete for finite games in the
standard \emph{normal form} representation, even for games with only two players.
Furthermore it is PPAD-complete to
even find a $\frac{1}{n^{\theta(1)}}$-approximate Nash equilibrium \cite{Chen06approx}.
In light of these negative results, one direction is to identify subclasses of games for which the problem is tractable.

A two-player normal form game can be represented by two payoff matrices, say $A$ and $B$, one for each player, and hence is
also known as bimatrix game \cite{AGTBook}. 
For bimatrix games, polynomial time NE computation algorithms are known for many subclasses, including zero-sum games \cite{dan},  (quasi-)
concave games \cite{ks}, and games with low rank payoff matrices \cite{lipton2003simple}. A line of work focuses on games of
small \emph{rank}, defined as $rank(A+B)$ by Kannan and Theobald \cite{kannan09rank}. They gave a fully polynomial time
approximation scheme (FPTAS) for fixed rank games and recently Adsul et al. \cite{agms} gave a polynomial time algorithm for
computing an exact Nash equilibrium for rank-$1$ games.
Specifying the two payoff matrices of a bimatrix game requires a polynomial number of entries in the numbers of pure strategies available to
the players. This is adequate when the set of pure strategies are explicitly given. However, there are situations where the natural
description gives the set of pure strategies implicitly, and as a result they may be exponential in the description of the game.
For example, normal form (bimatrix) representation of two player extensive-form game may have exponentially many strategies in
the size of the extensive-form description \cite{FudTir1991}. In such a case, even if the resulting bimatrix game has a fixed rank, the above results may not be
applied for efficient computation.

Nevertheless, certain types of extensive-form games have some combinatorial structure which can be exploited.
Koller, Megiddo and von Stengel \cite{KollerMegiddoVonStengel1996} converted an arbitrary two-player, perfect-recall, extensive form game
into a payoff-equivalent two-player game with continuous strategy sets. In this derived formulation, which they call the \emph{sequence
form}, there is a pair of payoff matrices $A$ and $B$, one for each player. Further, their strategy sets turn out to be compact polytopes
in Euclidean space of polynomial dimension. Given a pair of strategies $(x,y)$, utilities of the players are $x^TAy$ and $x^TBy$
respectively. Interestingly, the sequence form requires only a polynomial number of bits to specify.

Motivated by the sequence form of Koller \etal, we define {\em bilinear games}, which are two-player, non-cooperative,
single shot games represented by two payoff matrices, say $A$ and $B$, of dimension $M\times N$ and two polytopal compact
strategy sets $X=\{x \in \Real^M\ |\ Ex=e,\ x\ge0\}$ and $Y=\{y \in \Real^N\ |\ Fy=f,\ y\ge 0\}$. 
If $(x,y) \in X\times Y$ is the played strategy profile, then $x^TAy$ and $x^TBy$ are the utilities derived by player one and player two
respectively. In other words, the payoffs are bilinear functions of strategies, hence the name {\em bilinear games}. 
The scope of bilinear games is large enough to capture many interesting classes of games besides two-player extensive form
games with perfect recall. 
For example, for two-player Bayesian games
\cite{howson1974bayesian,Ponssard80lp}, polymatrix games \cite{howson1972equilibria}, and various classes of optimization duels
\cite{stoc11Dueling}, researchers have proposed polynomial-sized payoff-equivalent formulations which (either explicitly or implicitly)
turn out to be  bilinear games (see Section \ref{examples} for details).
Intuitively the polytopal strategy sets are concise representations of the original sets of mixed strategies as marginal probabilities, and
$x^TAy$ and $x^TBy$ express the expected utilities of the original game in terms of these marginal probabilities. 

\begin{remark}
Note that our formulation can express arbitrary polytopes as strategy space, for example if the strategy set of a player is expressed as
$\{x: Gx\leq g\}$, {\em i.e.}, the intersection of a set of half-spaces, it can be transformed to an equivalent game with strategy set
$\{x': Ex'=e, x'\geq 0\}$ using standard techniques ({\em i.e.}, by adding slack variables, substituting unbounded $x_i$ with
$x_i^+ - x_i^- $, $x_i^+,x_i^- \geq 0$, and modifying the payoff matrices accordingly).
\end{remark}

As we have seen that many different types of games can be concisely described as bilinear games, 
designing efficient algorithms for the more general bilinear games seem to be important as well as challenging.
Since bimatrix games is a subclass of bilinear games (see Section \ref{examples}), all the hardness results of bimatrix
games automatically apply to bilinear games as well. Therefore, the only hope is to design efficient algorithms or FPTAS
for the special subclasses. There are many similarities between bilinear and bimatrix games, for example, payoffs are
represented by two matrices, and utilities are bilinear functions of strategy vectors,
hence it is natural to try to adapt algorithms for bimatrix games to bilinear games. However, a technical challenge is that the polytopal strategy
sets of bilinear games are generally much more complex than the bimatrix case; in particular the number of vertices may be
exponential, while the set of mixed strategies is just a simplex. Recently, we were pointed to the constrained games, similar to bilinear games, considered by
Charnes \cite{Charnes53}. The linear programming technique by Charnes \cite{Charnes53} also works for zero-sum bilinear
games, {\em i.e.,} those with $(A+B)=0$ (see also \cite{koller1994} for sequence form, \cite{stoc11Dueling} for another
derivation,  \cite{Ponssard80lp} for zero-sum Bayesian games and \cite{Daskalakis09minmax} for zero-sum polymatrix games).
Further, it is easy to show that the linear complementarity program (LCP) characterization for the set of NE of a sequence
form game \cite{KollerMegiddoVonStengel1996} works for bilinear games as well. However the Lemke-Howson algorithm
may not be directly applied to the general bilinear games. There are results for certain subclasses of bilinear games with
specific structure in their strategy sets, e.g., Howson and Rosenthal \cite{howson1974bayesian} adapted the Lemke-Howson
algorithm to two-player Bayesian games and Koller \etal \cite{KollerMegiddoVonStengel1996} adapted Lemke's algorithm
\cite{Lemke1965} to two-player extensive-form games.
\\ \\
\noindent{\bf Our Contribution. }
In Section \ref{gne}, we define the bilinear game and show that the existence of a (symmetric) Nash equilibrium in a
(symmetric) bilinear game directly follows from the known results. Note that given a strategy of a player, the other
player would like to play a utility maximizing strategy. 
We formulate this problem as a primal-dual LP, similar to the Koller et al. formulation.
Using the complementarity conditions of the primal-dual, we characterize the Nash equilibria and then define {\em Best
Response Polytopes} (BRPs) and the notion of {\em fully-labeled} pairs in BRPs. Further, we show one-to-one correspondence
between the Nash equilibria and fully-labeled pairs. This in turn gives a quadratic programming (QP) formulation for the NE
computation problem. 

Next, we extend Kannan and Theobald's \cite{kannan09rank} rank-based hierarchy for bimatrix games to bilinear games, by
defining the rank of a bilinear game with payoff matrices $(A,B)$ as the rank of $(A+B)$. Zero-sum games are rank-$0$ games
and a NE for these games can be computed efficiently, as discussed above. 
In Section \ref{rank1}, we show that in spite of a very general structure of the strategy sets in bilinear games, the basic
approach given by Adsul et al. \cite{agms} to compute a NE of a rank-$1$ bimatrix game, can be generalized to compute a NE
of a rank-$1$ bilinear game by solving a rank-1 QP. While solving a rank-1 general QP is NP-hard \cite{ravi}, those
arising from the bilinear games can be solved in polynomial time.
In Section \ref{fptas}, we discuss two FPTAS algorithms for the fixed rank bilinear games, which are generalization of the
algorithms by Kannan and
Theobald \cite{kannan09rank} for the bimatrix games. Finally, in Section \ref{slowrank}, using the structure of BRPs, we
obtain a polynomial time algorithm for the case when the rank of either $A$ or $B$ is a constant, and rank of $E$
and $F$ are also constant. Since, a bimatrix game can be thought of as a bilinear game with $E$ and $F$ being a single row
of $1$s (see Example \ref{bimatrix} of Section \ref{examples}), this algorithm improves upon a result by Lipton et al.
\cite{lipton2003simple} and Kannan et al. \cite{kannan09rank} for
bimatrix games, where they require the rank of both $A$ and $B$ to be constant. This approach also gives an enumeration
algorithm for extreme equilibria, which runs in polynomial time under the above assumption and exponential time for the
general bilinear games. 

The following table summarizes all the NE computation results while keeping the bimatrix games in
perspective.

\begin{center}
\begin{tabular}{|c|c|c|}\hline
{\bf Results} & {\bf Bimatrix games} & {\bf Bilinear games} \\\hline \hline
Existence of (symmetric) NE & Nash  \cite{Nash1951} & Easy to show using (\cite{Nash1951}) \cite{Glicksberg1952} \\ \hline
LCP formulation for NE & Known \cite{AGTBook} & Koller et al. \cite{KollerMegiddoVonStengel1996} \\ \hline
NE as fully-labeled pairs of BRPs & Known \cite{agt} & This paper \\ \hline
Zero-sum games & Linear programming \cite{dan} & Linear programming \cite{koller1994} \\ \hline
Rank-$1$ games & Adsul et al. \cite{agms} & This paper \\ \hline
FPTAS for fixed rank games & Kannan and Theobald \cite{kannan09rank} & This paper \\ \hline
Games with low rank matrices & Lipton et al. \cite{lipton2003simple} & This paper \\ \hline
\end{tabular}
\end{center}

\section{Bilinear Games and Nash Equilibria}\label{gne}
{\bf Notations.} We consider a vector $x$ as a column vector by default and for the row vector, we use transpose 
({\em i.e.,} $x^T$). A ``$0$" in the block representation of a matrix, is the matrix with all zero entries of appropriate
dimension, and ``$1_k$" is a vector of all $1$s of length $k$. 
Let $x \in \Real^n$ be a vector and $c \in \Real$ be a scalar, then by $x\le c$ we mean, $\forall i\le n,\ x_i\le c$.
For a given matrix $X$, $X_i$ denotes the $i^{th}$ row of $X$, $X^{j}$ denotes the $j^{th}$ column of $X$ and $|X|$ denotes the maximum absolute entry in $X$, {\em i.e.,}
$|X|=\max_{ij} |X_{ij}|$.  For a set $S$, $\Delta(S)$ 
denotes the set of probability distribution vectors over the elements of $S$, {\em i.e.,} $\Delta(S)=\{x \in \Real^{|S|}\ |\
x\ge0,\ \sum_{i \in S} x_i=1\}$. 

Bilinear games are two-player non-cooperative, single shot games. A bilinear game is represented by two $M\times N$
dimensional payoff matrices $A$ and $B$, one for each player, and two compact polytopal strategy sets. 
Let $S_1=\{1,\dots,M\}$ be the set of rows and $S_2=\{1,\dots,N\}$ be the set of columns of the matrices. Let $E \in
\Real^{k_1\times M}$ and $F \in \Real^{k_2\times N}$ be the matrices, and $e \in \Real^{k_1}$ and $f \in \Real^{k_2}$ be the
vectors. The strategy set of the first-player is $X=\{x \in \Real^M\ |\ Ex=e, x\ge 0\}$ and the second-player is $Y=\{y \in
\Real^N\ |\ Fy=f, y\ge 0\}$. Sets $X$ and $Y$ are assumed to be compact. 
From a strategy profile $(x,y) \in X \times Y$, the payoffs obtained by the first and the second player are $x^TAy$ and $x^TBy$ 
respectively. 

From a Nash equilibrium (NE) strategy profile, no player gains by unilateral deviation. Formally,
\begin{definition}
A strategy profile $(x,y) \in X\times Y$ is a NE of the game $(A,B)$ iff $x^TAy \ge x'^TAy,\ \forall x' \in X$ and $x^TBy
\ge x^TBy',\ \forall y' \in Y$.
\end{definition}

As a direct corollary of Glicksberg's \cite{Glicksberg1952} result that there always exists a Nash equilibrium in a game whose players'
strategy spaces are convex and compact, and whose utility function for each player $i$ is continuous in all players' strategies and
quasi-concave in $i$'s strategy, we have

\begin{proposition}\label{existence}
Every bilinear game has at least one Nash equilibrium.
\end{proposition}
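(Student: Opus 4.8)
The plan is to verify directly that a bilinear game satisfies every hypothesis of Glicksberg's theorem, so that the existence of a Nash equilibrium follows immediately from the result quoted just before the statement. Glicksberg's theorem requires three things of the game: (i) each player's strategy set is convex and compact, (ii) each player's payoff is continuous jointly in both players' strategies, and (iii) each player's payoff is quasi-concave in that player's own strategy. I would check these one at a time.

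First, convexity and compactness of the strategy sets. The sets $X=\{x \in \Real^M\ |\ Ex=e,\ x\ge 0\}$ and $Y=\{y \in \Real^N\ |\ Fy=f,\ y\ge 0\}$ are each the intersection of an affine subspace with a finite family of half-spaces, hence polyhedra, and in particular convex. Compactness is part of the defining assumption of a bilinear game, so condition (i) holds with no further work. Next, continuity: the payoff functions $u_1(x,y)=x^TAy$ and $u_2(x,y)=x^TBy$ are bilinear forms in $(x,y)$, hence polynomials in the coordinates of $x$ and $y$, and therefore continuous on $X\times Y$. This gives condition (ii).

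Finally, own-strategy quasi-concavity. For a fixed $y \in Y$ the vector $Ay$ is fixed, so $u_1(\cdot,y)=(Ay)^Tx$ is an affine function of $x$, and affine functions are concave and hence quasi-concave; symmetrically, for fixed $x \in X$ the map $u_2(x,\cdot)=(x^TB)y$ is linear in $y$ and thus quasi-concave. This establishes condition (iii). With all three hypotheses verified, Glicksberg's theorem yields a Nash equilibrium, completing the argument.

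I expect no genuine obstacle here: every step is a routine verification, and the bilinearity of the payoffs is exactly what makes the quasi-concavity check trivial, since fixing one player's strategy linearizes the other player's utility. The only place to exercise mild care is in recalling that compactness of $X$ and $Y$ is \emph{imposed} as an assumption rather than derived, since a polyhedron of the form $\{x:Ex=e,\ x\ge 0\}$ need not be bounded in general.
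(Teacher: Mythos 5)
Your proposal is correct and follows exactly the route the paper takes: the paper derives the proposition as a direct corollary of Glicksberg's theorem, relying on convexity and compactness of $X$ and $Y$, continuity of the bilinear payoffs, and linearity (hence quasi-concavity) in each player's own strategy. You have merely written out the hypothesis checks that the paper leaves implicit.
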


A bilinear game is completely represented by a six-tuple $(A,B,E,F,e,f)$ in general. However, for ease of notation we
represent it by $(A,B)$ fixing $(E,F,e,f)$. Given a strategy $y \in Y$ of the second-player, the objective of the first
player is to play $x \in X$ such that $x^T(Ay)$ is maximized, {\em i.e.,} solve the following linear program
\cite{KollerMegiddoVonStengel1996}.
\begin{eqnarray}\label{pd}
\begin{array}{ccc}
\begin{array}{ll}
max:& x^T(Ay) \\
s.t. & Ex=e \\
& x\ge0
\end{array}
\hspace{1cm}& \underrightarrow{\mbox{Dual}} &\hspace{1cm}
\begin{array}{ll}
min: & e^Tp \\
s.t. & E^Tp \ge Ay
\end{array}
\end{array}
\end{eqnarray}

Note that $p_i$ is the dual variable of the equation $E_ix=e_i$ in the above program. 
At the optimal point $(x,p)$ of (\ref{pd}), we get $x_i>0 \Rightarrow A_iy=p^T E^i,\ \forall i \in S_1$ from the
complementarity. A similar condition can be obtained for the second-player, given an $x \in X$. At a Nash equilibrium both
the conditions are satisfied, and these characterize the NE strategies as follows:
A strategy pair $(x,y) \in X\times Y$ is a Nash equilibrium of the game $(A,B)$ iff it satisfies the following conditions.
\begin{eqnarray}\label{eq1}
\begin{array}{llllll}
\exists p \in \Real^{k_1}& \mbox{ s.t. }& Ay \leq E^Tp& \mbox{ and }& \forall i \in S_1,\hspace{.06in} x_i>0 \ \ \Rightarrow\ \ & A_iy = p^TE^i \\
\exists q \in \Real^{k_2}& \mbox{ s.t. }& x^TB \leq q^TF& \mbox{ and }& \forall j \in S_2,\hspace{.06in} y_j>0 \ \ \Rightarrow\ \ & x^TB^j = q^TF^j \\
\end{array}
\end{eqnarray}

The above characterization implies that, a player plays a strategy with non-zero probability only if
it gives the maximum payoff with respect to (w.r.t.) the opponent's strategy in {\em some sense}. Such strategies are called
the {\em best response} strategies (w.r.t. the opponent's strategy). Using this fact, we define {\em best response
polytopes} (BRPs), similar to the best response polytopes of a bimatrix game \cite{agt}.

In the following expression, $x$, $y$, $p$ and $q$ are vector variables.
\begin{eqnarray}\label{eq2}
\begin{array}{llclcll}
P=\{&(y,p)\in \Real^{N+k_1}\ \hspace{3pt}|& A_iy-p^TE^i\leq 0,& \forall i \in S_1;& y_j\geq 0,& \forall j \in S_2;& \hspace{.06in}Fy=f\}\\
Q=\{&(x,q)\in \Real^{M+k_2}\ |& x_i\geq 0,& \forall i \in S_1;&\hspace{.06in} x^TB^j-q^TF^j\leq 0,& \forall j \in S_2;& \hspace{.06in}Ex=e\}
\end{array}
\end{eqnarray}

The polytope $P$ in (\ref{eq2}) is closely related to the best
response strategies of the first-player for any given strategy of the second-player and it is called the {\em best
response} polytope of the first-player. Similarly $Q$ is called the best response polytope of the
second-player.
Note that, in both the polytopes the first
set of inequalities corresponds to the first-player, and the second set corresponds to the second-player. Since $|S_1|=M$ and
$|S_2|=N$, let the inequalities be numbered from $1$ to $M$, and $M+1$ to $M+N$ in both the polytopes. Let the {\em label}
$L(v)$ of a point $v$ in the polytope be the set of indices of the tight inequalities at $v$. If a pair $(v,w) \in P \times
Q$ is such that $L(v)\cup L(w)=\{1,\dots,M+N\}$, then it is called a {\em fully-labeled pair}. The proof of the next lemma follows
using (\ref{eq1}). 

\begin{lemma}\label{ne_le}
A strategy profile $(x,y)$ is a NE of the game $(A,B)$ iff $((y,p),(x,q))\in P\times Q$ is a fully-labeled pair,
for some $p$ and $q$. 
\end{lemma}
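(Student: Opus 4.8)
The plan is to read the lemma as a dictionary that translates the analytic Nash characterization in (\ref{eq1}) into the combinatorial language of labels on the best response polytopes. Since (\ref{eq1}) has already been established, via LP duality and complementary slackness, as an if-and-only-if description of the Nash equilibria, I would not reprove it; instead I would match its four pieces to membership in $P$, membership in $Q$, and the two blocks of labels, and then read off both implications simultaneously from this match.

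First I would record exactly what each label means. Because the feasibility constraints $Fy=f$ in $P$ and $Ex=e$ in $Q$ are equalities, they receive no label; only the $M+N$ listed inequalities in each polytope are numbered, and this is the one point where I would be explicit, to avoid a miscount. For a point $(y,p)\in P$, label $i\in\{1,\dots,M\}$ is tight iff $A_iy=p^TE^i$, while label $M+j$ (for $j\in\{1,\dots,N\}$) is tight iff $y_j=0$. Dually, for $(x,q)\in Q$, label $i$ is tight iff $x_i=0$, while label $M+j$ is tight iff $x^TB^j=q^TF^j$. I would also note that $(y,p)\in P$ is exactly the statement that $y\in Y$ and $Ay\le E^Tp$, and that $(x,q)\in Q$ is exactly $x\in X$ and $x^TB\le q^TF$, since stacking the row inequalities $A_iy\le p^TE^i$ recovers $Ay\le E^Tp$ and similarly on the $Q$ side.

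With this dictionary the equivalence is a label-by-label check. For each $i\le M$, the label $i$ lies in $L(y,p)\cup L(x,q)$ iff $A_iy=p^TE^i$ or $x_i=0$, which is precisely the contrapositive of the implication $x_i>0\Rightarrow A_iy=p^TE^i$ from the first line of (\ref{eq1}). Likewise, for each $j\le N$ the label $M+j$ is covered iff $y_j=0$ or $x^TB^j=q^TF^j$, which is the contrapositive of $y_j>0\Rightarrow x^TB^j=q^TF^j$ from the second line. Hence $((y,p),(x,q))$ is fully-labeled, i.e. $L(y,p)\cup L(x,q)=\{1,\dots,M+N\}$, exactly when both complementarity implications of (\ref{eq1}) hold.

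Putting the two observations together finishes both directions. If $(x,y)$ is a Nash equilibrium, (\ref{eq1}) supplies $p$ and $q$ making the inequalities and the implications hold; the inequalities together with $x\in X,\ y\in Y$ place $(y,p)\in P$ and $(x,q)\in Q$, and the implications make the pair fully-labeled. Conversely, a fully-labeled pair $((y,p),(x,q))\in P\times Q$ yields $y\in Y$, $x\in X$, the inequalities $Ay\le E^Tp$ and $x^TB\le q^TF$, and (by the label check) the two implications, which is verbatim (\ref{eq1}); hence $(x,y)$ is a Nash equilibrium. The only real subtlety, and the place I would slow down, is the bookkeeping just mentioned: keeping the two label blocks aligned with the correct player and remembering that the equality constraints carry no label, so that being fully-labeled adds nothing beyond complementary slackness once polytope membership is assumed. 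Since the two players play symmetric roles, I would write one block in full and state the other by symmetry.
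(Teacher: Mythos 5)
Your proposal is correct and is essentially the paper's own argument: the paper simply states that the lemma ``follows using (\ref{eq1})'', and your label-by-label translation of the complementarity implications into the fully-labeled condition is exactly the intended (and only) verification. Your explicit bookkeeping of which inequalities carry which labels is a welcome expansion of what the paper leaves implicit.
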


A game is called non-degenerate if both the polytopes are non-degenerate. Note that
a fully-labeled pair of a non-degenerate game has to be a vertex-pair. Lemma \ref{ne_le} implies that a NE strategy profile has to satisfy
the following linear complementarity conditions (LCP) over $P\times Q$ \cite{KollerMegiddoVonStengel1996}.
\begin{eqnarray}\label{eq3}
((y,p),(x,q))\in P\times Q \mbox{ corresponds to a NE }\  \ \Leftrightarrow\  \ x^T(Ay-E^Tp)=0\
\mbox{ and }\hspace{-0.9cm}\nonumber \\  (x^TB-q^TF)y=0 
\end{eqnarray}

Clearly, $x^T(Ay-E^Tp)\le0 \mbox{ and } (x^TB-q^TF)y\le0$ over $P\times Q$ and hence $x^T(Ay-E^Tp)+(x^TB-q^TF)y\le0$.
Simplifying the expression using $Ex=e$ and $Fy=f$ we get $x^T(A+B)y-e^Tp-f^Tq\leq 0$ over $P\times Q$ and equality holds
iff $(x,y)$ is a NE (using (\ref{eq3})). This gives the following QP formulation which captures all the NE of
game $(A,B)$ at its optimal points.
\begin{eqnarray}\label{eq4}
\begin{array}{rl}
\mbox{max:} & x^T(A+B)y-e^Tp-f^Tq \\
&\mbox{s.t. } \ \ ((y,p),(x,q))\in P\times Q
\end{array}
\end{eqnarray}

\vspace{0.3cm}
\noindent{\bf Symmetric Bilinear Games. }Nash \cite{Nash1951} proved that any symmetric finite game has a symmetric Nash equilibrium.
The concept of symmetry can be straightforwardly adapted to the bilinear games: We say a bilinear game is symmetric if
$B=A^T$, $E=F$, and $e=f$. A strategy profile $(x,y)$ is symmetric if $x=y$. A straightforward adaptation of Nash's
\cite{Nash1951} proof yields the following proposition.

\begin{proposition}
Any symmetric bilinear game has a symmetric NE.
\end{proposition}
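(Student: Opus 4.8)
The plan is to reduce the existence of a symmetric NE to a single self-best-response fixed-point problem on the common strategy set, and then invoke Kakutani's theorem, mirroring Nash's \cite{Nash1951} argument. First I would use symmetry to collapse the two strategy sets into one: since $E=F$ and $e=f$, the defining systems of $X$ and $Y$ coincide, so $X=Y$ (which we take to be nonempty, as well as convex and compact by assumption). A symmetric profile is then simply a pair $(x,x)$ with $x\in X$, and the entire question becomes whether some $x\in X$ can be a best response to itself.

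The crucial algebraic step is to show that on the diagonal $\{(x,x):x\in X\}$ the two players' best-response requirements coincide. Using $B=A^T$ together with the fact that a scalar equals its own transpose, for any $x,y'\in X$ we get $x^TBy' = x^TA^Ty' = (y')^TAx$, and in particular $x^TBx = x^TA^Tx = x^TAx$. Hence player two's condition $x^TBx \ge x^TBy'$ for all $y'\in Y=X$ rewrites as $x^TAx \ge (y')^TAx$ for all $y'\in X$, which is exactly player one's condition $x^TAx \ge (x')^TAx$ for all $x'\in X$. Therefore, by the definition of NE, the profile $(x,x)$ is a Nash equilibrium \emph{if and only if} $x \in \argmax_{x'\in X}\, (x')^TAx$; both inequalities in the definition are captured by this single requirement.

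Next I would define the self-best-response correspondence $r:X\rightrightarrows X$ by $r(z)=\argmax_{x\in X}\, x^TAz$, so that symmetric equilibria are precisely its fixed points $x\in r(x)$. To apply Kakutani's fixed-point theorem I would verify the standard hypotheses: $X$ is nonempty, convex and compact; for each fixed $z$ the map $x\mapsto x^TAz$ is linear, so $r(z)$ is a nonempty face of $X$, hence convex and compact; and $r$ is upper hemicontinuous with closed graph, which follows from Berge's maximum theorem, since the objective is jointly continuous in $(x,z)$ and the feasible set $X$ is constant (trivially a continuous, compact-valued constraint correspondence). Kakutani then yields $x^*\in r(x^*)$, and by the reduction above $(x^*,x^*)$ is the desired symmetric Nash equilibrium.

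The main obstacle is conceptual rather than computational: it is the diagonal reduction of the second paragraph, where one must combine $B=A^T$ with the scalar-transpose identity to merge the two separate best-response conditions into one self-best-response condition, so that a fixed point of a \emph{single} correspondence suffices instead of a Kakutani argument on $X\times Y$. Once this is in place, the remainder is the routine Kakutani existence proof, and the compactness and convexity of the polytope $X$ assumed in the model are exactly the properties that make $r$ Kakutani-admissible.
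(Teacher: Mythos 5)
Your proof is correct and is essentially the approach the paper intends: the paper gives no written proof beyond ``a straightforward adaptation of Nash's \cite{Nash1951} proof,'' and your argument---using $B=A^T$, $E=F$, $e=f$ to collapse the two best-response conditions on the diagonal into the single self-best-response condition $x\in\argmax_{x'\in X}(x')^TAx$, then applying a fixed-point theorem on the common compact convex polytope $X=Y$---is precisely that adaptation. The only cosmetic difference is that you invoke Kakutani on the best-response correspondence rather than Brouwer on Nash's gain map, which is immaterial (and arguably cleaner for general polytopal strategy sets).
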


Note that for a symmetric game $(A,A^T)$, strategy sets $X$ and $Y$ are the same. 
From (\ref{eq1}) and (\ref{eq3}), it is clear that a symmetric NE $x\in X$ must satisfy 
$
q=p,\ Ax\le E^Tp,\ x^T(Ax - E^Tp)=0
$.
This gives the following QP formulation to capture all symmetric NE of a symmetric game $(A,E,e)$.
\begin{eqnarray}\label{symQP}
\begin{array}{rl}
\mbox{max:} & x^TAx-e^Tp \\
\mbox{s.t. } & Ax \le E^Tp;\ \ Ex=e;\ \ x\ge0 
\end{array}
\end{eqnarray}

A bilinear game $(A,B,E,F,e,f)$ can be converted to an equivalent symmetric game $(A',E',e')$, where 
\[
A'=\left[\begin{array}{cc}0& A\\
B^T & 0\end{array}\right],\ \ \
E'=\left[\begin{array}{cc}E & 0 \\
0 & F\end{array}\right],\ \ \  e'=\left[\begin{array}{c}e \\ f \end{array}\right], \mbox{ with strategy vector }z=\left[\begin{array}{c} x\\
y\end{array}\right]
\]

It is easy to check that any symmetric NE of the derived game corresponds to a NE of the original game and vice-versa.
In the next section, we discuss reductions of different games to polynomial size bilinear games, which do not seem to be
possible with the bimatrix games. 

\subsection{Examples of Bilinear Games}\label{examples}
The simplest subclass of bilinear games is the set of two-player normal-form games (bimatrix games).
\begin{example}\label{bimatrix}{\em Bimatrix Games}\\
A bimatrix game $(A,B)$ with $A,B\in \Real^{M\times N}$ can be straightforwardly transformed to the bilinear game
$(A,B,E,F,e,f)$ where $E^T=\One_M$,
$e=1$ and similarly $F^T=\One_N$, $f=1$. \qed
\end{example}

Many other interesting classes of finite games may be formulated as bilinear games. In this section, we provide a few examples
where the bilinear formulation are exponentially smaller than a direct bimatrix formulation.
\begin{example}{\em Two-player Bayesian Games}\label{ex:bayesian}\\
In a Bayesian game \cite{harsanyi1967}, there is a {\em type set} associated with each player, which is her private
information. The nature draws the type for each player from a joint distribution, which is a common knowledge, and each player gets to know
only her own type before choosing an action. The final utilities of the game is determined by types of all the players, and hence are
uncertain. 

Here we consider the two-player case, where $T_i$s are the type sets and $S_i$s are the strategy sets. 
The joint probability distribution is denoted by $p_{ts}$ for the type profile $(t,s) \in T_1\times T_2$. Let $S=S_1\times
S_2, T=T_1\times T_2,\ |T_i|=t_i$ and $|S_i|=m_i$. The utilities are the functions of actions and types, {\em i.e.,}
$u_i:S\times T\rightarrow \Real$, hence for every type profile they can be represented by the two matrices. For a type profile
$(t,s)$ let $A^{ts}$ and $B^{ts}$ denote the respective $m_1\times m_2$ dimensional payoff matrices. The strategy of a
player is to decide her play for each of her type so that her expected payoff is maximized, {\em i.e.,} $x:T_1 \rightarrow
\Delta(S_1)$ for player one and $y:T_2 \rightarrow \Delta(S_2)$ for player two. 
For a $t \in T_1$, let $x^t$ denote the mixed strategy given type $t$. 

The \emph{induced normal form} of this game is a bimatrix game in which each pure strategy of a player prescribes an action
for each of her types. Thus the size of the induced normal form is exponential in the number of types.
However, it can be formulated as a polynomial sized bilinear game as follows 

\[
\begin{array}{cccc}
A = \left[ \begin{array}{ccc}
p_{11}A^{11}&\cdots&p_{1t_2}A^{1t_2}\\
\vdots & \ddots& \vdots\\
p_{t_1 1} A^{t_1 1}& \cdots &p_{t_1 t_2}A^{t_1t_2} \end{array}\right],
&
B = \left[ \begin{array}{ccc}
p_{11}B^{11}&\cdots&p_{1t_2}B^{1t_2}\\
\vdots & \ddots& \vdots \\
p_{t_1 1} B^{t_1 1}& \cdots &p_{t_1 t_2}B^{t_1t_2} \end{array}\right],
&

E=\left[\begin{array}{ccc}
\One_m^T &0 &\cdots\\
0&\One_m^T& \\
\vdots& &\ddots
\end{array}\right],
&
F=\left[\begin{array}{ccc}
\One_n^T &0 &\cdots\\
0&\One_n^T& \\
\vdots& &\ddots
\end{array}\right]
\end{array}
\]
and $e=\One_{t_1}$, $f=\One_{t_2}$.
Given mixed strategies $x^1,\ldots, x^{t_1}, y^1,\ldots ,y^{t_2}$ of the Bayesian game, define $x=[x^{1^T}, \cdots, x^{t_1^T}]^T$ and
$y=[y^{1^T}, \cdots, y^{t_2^T}]^T$.
Then $x^TAy$ and $x^TBy$ are exactly the expected utilities of the Bayesian game.
This transformation is implicit in Howson and Rosenthal's \cite{howson1974bayesian} adaptation of Lemke-Howson algorithm to
two-player Bayesian games.  \qed 
\end{example}

\begin{example}{\em Polymatrix Games \cite{howson1972equilibria}}\\
A polymatrix game is an $n$-player game in which each player's utility is the sum of the utilities resulting from her
bilateral interactions with each of the $n-1$ other players. Let $S_i$ be player $i$'s set of pure strategies. The game is
represented by the payoff matrices $A^{ij}\in \Real^{|S_i|\times| S_{j}|}$ for all pairs of players $(i,j)$.
Let $x^i \in \Delta(S_i)$ denote a mixed strategy of player $i$. 
Given a strategy profile $(x^1,\ldots,x^n)$, the expected utility of player $i$ is, 
\[
u_i(x^1,\ldots,x^n)=\sum_{j\neq i} (x^i)^T A^{ij} x^j
\]

We show that any polymatrix game can be transformed to a symmetric bilinear game 
such that any symmetric NE of the bilinear game corresponds to a NE of the polymatrix game.
Our derivation is adapted from Howson's \cite{howson1972equilibria} formulation of NE of polymatrix games as solutions
of an LCP. Formally, given a polymatrix game, we define the \emph{induced symmetric bilinear game} as $(A,A^T,E,E,e,e)$,
where
\[
A = \left[ \begin{array}{cccc}
0&A^{12}&\cdots&A^{1n}\\
A^{21}&0 & & A^{2n}\\
\vdots & & \ddots& \\
 A^{n 1}&A^{n2} & &0 \end{array}\right], \ \ \ 
E=\left[\begin{array}{cccc}
1_{|S_1|}^T & 0 & \cdots & 0\\
0 & 1_{|S_2|}^T & \cdots & 0\\
\vdots & & \ddots & \\
0 & 0 & \cdots & 1_{|S_n|}^T\end{array}\right],
\] 
and $e=1_n$. The space of strategy vectors is $x=[x^{1^T} \cdots x^{n^T}]^T$.

\begin{proposition}
Consider a polymatrix game of $n$ players.
The strategy $(x^1,x^2,\ldots,x^n)$ is an NE of the game if and only if $(x,x)$ is a symmetric NE of its induced bilinear
game.  
\end{proposition}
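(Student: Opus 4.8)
The plan is to reduce both ``$(x,x)$ is a symmetric NE of the induced game'' and ``$(x^1,\dots,x^n)$ is an NE of the polymatrix game'' to one and the same separable optimization condition, and then read off the biconditional. First I would make the strategy set explicit: the constraints defining $X$ are $Ex=e$ and $x\ge 0$, and since $E$ is block-diagonal with blocks $\One_{|S_i|}^T$ and $e=\One_n$, these say exactly $\One_{|S_i|}^T x^i=1$ and $x^i\ge 0$ for each $i$, i.e. $X=\Delta(S_1)\times\cdots\times\Delta(S_n)$. Writing $w=Ax$ and letting $w^i$ be the subvector of $w$ indexed by $S_i$, the zero diagonal blocks of $A$ give $w^i=\sum_{j\neq i}A^{ij}x^j$, so player $i$'s polymatrix payoff against the fixed profile $x^{-i}$ is precisely the linear map $z^i\mapsto (z^i)^T w^i$ on $\Delta(S_i)$.

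The key step is to exploit $B=A^T$ to collapse the two best-response problems into one. Given that the opponent plays $x$, player $1$ maximizes $(x')^T A x = w^T x'$ over $x'\in X$, while player $2$ maximizes $x^T A^T y = w^T y$ over $y\in X$; these are the identical linear program $\max_{z\in X} w^T z$. Hence $(x,x)$ is a symmetric NE of the induced game iff $x\in\argmax_{z\in X} w^T z$. Equivalently, this is just the LP-optimality form of the symmetric-NE characterization stated before (\ref{symQP}): the existence of $p$ with $Ax\le E^Tp$ and $x^T(Ax-E^Tp)=0$ is exactly dual feasibility together with complementary slackness for this same LP, so I could route through that characterization instead if a fully formal backing is preferred.

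Next I would separate the objective. Because $X$ is a product of simplices and $w^T z=\sum_i (w^i)^T z^i$ splits additively over the blocks, the single LP decomposes into independent per-player problems, giving $x\in\argmax_{z\in X} w^T z$ iff $x^i\in\argmax_{z^i\in\Delta(S_i)} (w^i)^T z^i$ for every $i$. By the first step the right-hand side says exactly that each $x^i$ is a best response to $x^{-i}$ in the polymatrix game, i.e. that $(x^1,\dots,x^n)$ is an NE. Chaining the three equivalences then yields the proposition in both directions simultaneously.

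I expect the only delicate point to be the collapse in the second step: one must verify that with $B=A^T$ both maximands literally become $w^T z$ for the \emph{same} fixed vector $w=Ax$, so that a single gradient drives both players; otherwise the per-player simplex problems would not be coupled correctly and the separation in the third step would characterize a different object. The remaining ingredients---identifying $X$ as a product of simplices and maximizing a linear objective over such a product coordinate-blockwise---are routine.
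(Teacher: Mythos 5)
Your proof is correct and establishes in detail exactly what the paper gestures at: the paper's entire proof is the one-line remark that ``the respective incentive constraints are equivalent,'' and your collapse of the two best-response problems into the single LP $\max_{z\in X} w^Tz$ (using $B=A^T$) followed by the blockwise decomposition over the product of simplices is precisely the verification of that equivalence. No gaps.
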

\begin{proof}
The proof is relatively straightforward, by observing that the respective incentive constraints are equivalent.
Thus the problem of finding a NE of a polymatrix game reduces to the problem of finding a symmetric NE of a symmetric bilinear game.
Note that an asymmetric NE doesn't correspond to a NE of the polymatrix game.
\qed
\end{proof}
\end{example}

Immorlica \emph{et al.} \cite{stoc11Dueling} analyzed several classes of  games between two optimization algorithms whose
objectives are to outperform each other. The space of pure strategies are the possible outputs of the algorithm which are
exponential, however the authors were able to formulate some of these games as zero-sum bilinear games (which they call
{\em bilinear duels}). We describe one example from
\cite{stoc11Dueling}.

\begin{example}{\em Ranking Duels \cite{stoc11Dueling}}\\
Each player chooses a ranking over $m$ elements. Thus the number of pure strategies is exponential in $m$. Such a ranking
can be represented as a $m\times m$ permutation matrix. By the Birkoff-von Neumann theorem, the space of mixed strategies
corresponds to the space of $m\times m$ \emph{doubly-stochastic} matrices with each row and each column sum to 1.
This space can be described by the polytope $\{x\in \Real^{m^2}\ |\  x\geq 0;\ \forall i,\ \sum_j x_{ij}=1;\ \forall  j,
\sum_i x_{ij}=1\}$.  Viewing $x$ and $y$ as column vectors, the sizes of the corresponding $E,e$ are polynomial in $m$.
Immorlica \emph{et al.} \cite{stoc11Dueling} constructed matrices $A\in \Real^{m^2\times m^2}$ such that the players' expected utilities are
equal to $x^TAy$ and $-x^TAy$ respectively. \qed
\end{example}

\begin{example}{\em Two-player Perfect-recall Extensive-form Games}\\
Extensive-form represents a dynamic game as a tree \cite{MAS}, where every pure strategy of a player prescribes a move at
each of the player's information sets. As a result the number of pure
strategies may be exponential in the size of the extensive-form description. 
Fortunately, if we assume perfect recall---roughly, that each player remembers all her
past decisions and observations---then there always exists a Nash equilibrium in \emph{behavior strategies}, where each
player independently chooses a distribution over actions at each of her information sets. Representation of a behavior
strategy requires space linear in the extensive form.  However, the expected utilities of the two-player perfect-recall
extensive-form games are not bilinear functions of the behavior strategies.  Koller \emph{et al.}
\cite{KollerMegiddoVonStengel1996}
proposed the \emph{sequence form}, which is a bilinear game formulation for these games.
The number of rows and columns of the matrices $A$ and $B$, in the bilinear form, are the number of feasible sequences of 
plays of the first and second player respectively. If a play sequence pair $(i,j)$ leads to a leaf node then the $ij^{th}$
entry of $A$ and $B$ are the payoffs of the first and second player at that leaf node, otherwise it is zero
\symbolfootnote[1]{Due to chance moves, the entry may correspond to multiple leaf nodes. In that case the entry stores the
expected payoff.}. A strategy $x$ of the first player is such that,
$x(\mbox{root})=1$, and if a sequence $\sigma$ ends at an information node $C$, then $\sum_{a \in Actions(C)} x_{\sigma a} -
x_\sigma=0$. Similar conditions hold for a strategy $y$ of the second player. Such a strategy may be transformed to
a behavior strategy and vice versa. This gives $E,F,e$ and $f$. Note that the reduction is
polynomial sized and $x^TAy$ and $x^TBy$ are exactly the expected payoffs under the corresponding behavior strategies. We
refer readers to \cite{KollerMegiddoVonStengel1996} for more details. \qed

\end{example}

\begin{remark}
Lemke's algorithm on the LCP formulation of bilinear games terminates with a solution
({\em i.e.}, not at a ray) if the only non-negative solutions $x$ and $y$ to $Ex = 0$ and $Fy = 0$ are $x = 0$ and
$y = 0$, and the payoff matrices are non-positive, {\em i.e.}, $A \le 0$ and $B \le 0$. This result directly follows from
\cite{KollerMegiddoVonStengel1996}. Note that games of all the above examples satisfy these requirements, without loss of generality.
\end{remark}

The rank of a game $(A,B)$ is defined as $rank(A+B)$, and we consider the rank based hierarchy of the bilinear games. The
set of rank-$k$ games consists of all $(A,B)$ such that $rank(A+B)\le k$. Zero-sum games are rank-$0$ games, the smallest
set in the hierarchy.
Koller et al. \cite{koller1994} gave an LP formulation for zero-sum bilinear games, derived from two-player extensive
form games with perfect recall. However, their formulation works for general bilinear games as well.
Beyond rank-$0$ games, no polynomial time algorithm is known for NE computation (even for the reduction
specific formulations). In the next section, we extend the polynomial time solvability of Nash equilibrium for the rank-$1$
bilinear games. 

For all the algorithms that follow, we make the following assumptions (without loss of generality): {\em 1)} The
entries of $A,B,E,F,e$ and $f$ are integers, since scaling them by a positive
number does not change the set of NE. {\em 2)} The equalities $Ex=e$ and $Fy=f$ are all linearly independent because even if we
discard the dependent equalities, $X$ and $Y$ do not change. {\em 3)} The letter $\mathcal L$ denotes the bit length of the input game.

\section{Rank-1 Games and Polynomial Time Algorithm}\label{rank1}
The approach used in this section is motivated by the paper \cite{agms}.
Given a rank-$1$ game $(A,B)$, it is easy to find $\alpha \in \Real^M$ and $\beta \in \Real^N$ such that $A+B=\alpha \cdot \beta^T$, since
any two rows of $A+B$ are multiple of each other. In that case $B=-A+\alpha \cdot \beta^T$. Let $G(\alpha)=(A, -A+\alpha \cdot \beta^T)$ be
a parametrized game for a fixed $A \in \Real^{M\times N}$ and $\beta \in \Real^N$. For any game $G(\alpha)$ the BRP of 
first-player is fixed to $P(\alpha)=P$ (of (\ref{eq2})) since $A$ is fixed. However, the BRP of second-player $Q(\alpha)$
changes with the parameter. Now, consider the following polytope with $x$, $q$ as vector variables and $\lambda$ as a scalar
variable:
\begin{eqnarray}\label{eq5}
\begin{array}{llclcll}
Q'=\{&(x,\lambda,q)\in \Real^{M+1+k_2}\ |& x_i\geq 0,& \forall i \in S_1;&\hspace{.06in}
x^T(-A^j)+\lambda\beta_j-q^TF^j\leq 0,& \forall j \in S_2;& \hspace{.06in}Ex=e\}  
\end{array}
\end{eqnarray}

It is easy to see that $Q(\alpha)$ is the projection of $\{(x,\lambda,q)\in Q'\ |\ \lambda=x^T\alpha\}$ on $(x,q)$-space. In other words,
$Q(\alpha)$ is a section of $Q'$ obtained by hyper-plane $\lambda=x^T\alpha$. Clearly, $Q'$ covers $Q(\alpha), \forall \alpha \in
\Real^M$. Number the equations of $Q'$ in a similar way as the equations of $Q$.
Let $\mathcal N$ be the set of fully-labeled pairs of $P\times Q'$, {\em i.e.,} $\mathcal N=\{(v,w) \in P\times Q'\ |\ L(v) \cup
L(w)=\{1,\dots,M+N\}\}$. Using the definition of fully-labeled pairs, it is easy to check that for a given $((y,p),(x,\lambda,q))\in P\times Q'$, 
\begin{eqnarray}\label{eq45}
((y,p),(x,\lambda,q))\in \mathcal N\ \ \Leftrightarrow\ \ x^T(Ay - E^T p)=0\ \mbox{  and  }\ (x^T(-A) +\lambda \beta^T-q^TF)y=0
\end{eqnarray}

\begin{lemma}\label{le1}
Let $(v,w) \in \mathcal N$, $v=(y,p)$ and $w=(x,\lambda,q)$.
\begin{itemize}
\item For all $\alpha$ such that $\lambda=x^T\alpha$, $(x,y)$ is a NE of $G(\alpha)$.
\item For every NE $(x,y)$ of a game $G(\alpha)$, there exists a $(v,w) \in \mathcal N$, where $\lambda=x^T\alpha$.
\end{itemize}
\end{lemma}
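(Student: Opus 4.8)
The plan is to prove both directions by reducing everything to the two complementarity equivalences already in hand, namely the NE characterization (\ref{eq3}) for a fixed game and the identity (\ref{eq45}) describing membership in $\mathcal N$. The single structural fact that drives the argument is that $Q'$ in (\ref{eq5}) is built so that the second player's best response polytope $Q(\alpha)$ of $G(\alpha)$ is exactly the section of $Q'$ cut out by $\lambda=x^T\alpha$, while the first player's polytope $P$ is the same for every $\alpha$ (it depends only on $A$). Concretely, since $B=-A+\alpha\beta^T$ gives $B^j=-A^j+\alpha\beta_j$, we have $x^TB^j-q^TF^j = x^T(-A^j)+(x^T\alpha)\beta_j-q^TF^j$, so the $j$-th defining inequality of $Q(\alpha)$ coincides with the $j$-th defining inequality of $Q'$ precisely when $\lambda=x^T\alpha$. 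First I would record this observation and fix the numbering of the constraints of $Q'$ to agree index-by-index with that of $Q$.

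For the first bullet, take $(v,w)=((y,p),(x,\lambda,q))\in\mathcal N$ and any $\alpha$ with $\lambda=x^T\alpha$. Membership is immediate: $(y,p)\in P$ by hypothesis, and substituting $\lambda=x^T\alpha$ into the defining inequalities of $Q'$ turns them into the defining inequalities of $Q(\alpha)$, so $(x,q)\in Q(\alpha)$. For the complementarity conditions, I would rewrite the second equation of (\ref{eq45}) using $x^TB=x^T(-A)+(x^T\alpha)\beta^T=x^T(-A)+\lambda\beta^T$, which shows that $(x^T(-A)+\lambda\beta^T-q^TF)y=0$ is the same equation as $(x^TB-q^TF)y=0$; the first equation of (\ref{eq45}) is already $x^T(Ay-E^Tp)=0$. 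Thus the pair $((y,p),(x,q))\in P\times Q(\alpha)$ satisfies exactly the two equations of (\ref{eq3}) for the game $G(\alpha)$, and Lemma \ref{ne_le} (equivalently (\ref{eq3})) yields that $(x,y)$ is a NE of $G(\alpha)$.

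The second bullet is the same computation run backwards. Given a NE $(x,y)$ of $G(\alpha)$, Lemma \ref{ne_le} supplies $p,q$ with $((y,p),(x,q))\in P\times Q(\alpha)$ fully-labeled, hence satisfying (\ref{eq3}). I would then set $\lambda=x^T\alpha$ and $w=(x,\lambda,q)$; the section identity shows $w\in Q'$, and replacing $x^TB$ by $x^T(-A)+\lambda\beta^T$ turns the two equations of (\ref{eq3}) into the two equations of (\ref{eq45}), so $((y,p),w)\in\mathcal N$ with $\lambda=x^T\alpha$, as required.

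The only point needing care---the step I expect to be the real (if mild) obstacle---is the bookkeeping that makes ``fully-labeled in $P\times Q'$'' and ``fully-labeled in $P\times Q(\alpha)$'' literally the same condition under $\lambda=x^T\alpha$. This rests on the extra coordinate $\lambda$ introducing no new constraint into $Q'$, so that the index set $\{1,\dots,M+N\}$ is unchanged and the set of tight inequalities at $(x,\lambda,q)$ matches, index by index, the set of tight inequalities of $Q(\alpha)$ at $(x,q)$; the tight inequalities contributed by $P$ at $(y,p)$ are unaffected since $P$ does not move with $\alpha$. Working instead through the complementarity forms (\ref{eq3}) and (\ref{eq45}), as above, sidesteps this bookkeeping entirely, which is why I would present the proof in that form.
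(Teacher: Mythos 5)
Your proof is correct and follows essentially the same route as the paper's: both directions are handled by substituting $\lambda=x^T\alpha$ to pass between the complementarity identity (\ref{eq45}) for $\mathcal N$ and the NE characterization (\ref{eq3}) for $G(\alpha)$. Your version is slightly more careful about verifying polytope membership ($(x,q)\in Q(\alpha)$ via the section identity), which the paper leaves implicit, but the argument is the same.
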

\begin{proof}
Since $(v,w)$ is fully-labeled it satisfies, $x^T(Ay - E^T p)=0$ and $(x^T(-A) +\lambda \beta^T-q^TF)y=0$. Let $\alpha$ be
such that $\lambda=x^T\alpha$ then we get $(x^T(-A) +(x^T\alpha) \beta^T-q^TF)y=0 \Rightarrow
(x^T(-A+\alpha\beta^T)-q^TF)y=0$. This implies that $(x,y)$ is a NE of the game $(A,-A+\alpha\beta^T)$ ({\em i.e.,}
$G(\alpha)$), since it satisfies the complementarity condition of (\ref{eq3}).

Given a $(x,y)$ of $G(\alpha)$, from (\ref{eq1}) and (\ref{eq3}) it is clear that $\exists p,q$ such that $x^T(Ay-E^Tp)=0$ and
$(x^T(-A+\alpha\beta^T)-q^TF)y=0$. Let $\lambda=x^T\alpha$, then we get $(x^T(-A) +\lambda \beta^T-q^TF)y=0$. Therefore,
$((y,p),(x,\lambda,q)) \in \mathcal N$. \qed
\end{proof}

The above lemma establishes strong relation between the set of NE of all the $G(\alpha)$s and the set $\mathcal N$. Next we discuss the structure
of $\mathcal N$, and later use it to design a polynomial time algorithm to find a NE of a given game $G(\alpha)$.

The polytopes $P$ and $Q'$ are assumed to be non-degenerate\footnote{Degeneracy may be handled using standard techniques as done in \cite{agms}.}, and let
$k_1=k_2=k$ for simplicity.  As there are $k$ linearly independent equalities in $P$ and $Q'$, they are of dimension $N$ and $M+1$ respectively.
Therefore, $\forall (v,w)\in \mathcal N$, $|L(v)|\le N$ and $|L(w)| \leq M+1$. Since, $M+N$ labels are required for a pair
$(v,w)$ to be part of $\mathcal N$, $\mathcal N \subset 1$-skeleton of $P\times Q'$. Further, if $(v,w) \in \mathcal N$ is a vertex pair
then $|L(v) \cap L(w)|=1$. Let the label in the intersection be called the {\em duplicate label} of $(v,w)$. Relaxing the
inequality corresponding to the duplicate label at $(v,w)$ in $P$ and $Q'$ respectively gives its two adjacent edges in
$\mathcal N$. Therefore, every vertex of $\mathcal N$ has degree two. This implies that $\mathcal N$ is a set of cycles and
infinite paths (unbounded edges at both the ends). We will show that
$\mathcal N$ forms a single infinite path.  The next lemma follows directly from the definition of $P$ (\ref{eq2}) and $Q'$
(\ref{eq5}), and expression (\ref{eq45}).

\begin{lemma}\label{le2}
For all $(v, w) = ((y,p), (x,\lambda,q)) \in P\times Q'$, we have $\lambda(\beta^Ty)-e^Tp-f^Tq\leq0$, and the equality holds iff $(v,w)\in
\mathcal N$.
\end{lemma}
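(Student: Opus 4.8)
The plan is to mirror the derivation that produced the QP objective (\ref{eq4}) from the complementarity conditions (\ref{eq3}), now applied to the lifted polytope $Q'$ in place of $Q$. First I would establish that the two relevant bilinear forms are individually nonpositive on $P\times Q'$. From the definition of $P$ in (\ref{eq2}), every point $(y,p)\in P$ satisfies $A_iy-p^TE^i\le 0$ for all $i\in S_1$, i.e. $Ay\le E^Tp$ componentwise; combining this with $x\ge 0$ (a defining constraint of $Q'$) gives $x^T(Ay-E^Tp)\le 0$, since it is a sum of products of nonnegative $x_i$ with nonpositive entries. Symmetrically, the defining inequalities of $Q'$ in (\ref{eq5}) give $x^T(-A^j)+\lambda\beta_j-q^TF^j\le 0$ for every $j\in S_2$, and pairing these with $y\ge 0$ (from $P$) yields $(x^T(-A)+\lambda\beta^T-q^TF)y\le 0$.

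Next I would add the two inequalities and simplify. The cross terms $x^TAy$ and $-x^TAy$ cancel, leaving $-x^TE^Tp+\lambda\beta^Ty-q^TFy\le 0$. Using $Ex=e$ (from $Q'$) we have $x^TE^Tp=(Ex)^Tp=e^Tp$, and using $Fy=f$ (from $P$) we have $q^TFy=q^Tf=f^Tq$. Substituting these gives precisely $\lambda(\beta^Ty)-e^Tp-f^Tq\le 0$, the claimed inequality, valid for all $(v,w)\in P\times Q'$.

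For the equality characterization, I would observe that the left-hand side is exactly the sum of the two individually nonpositive quantities above, so it equals zero if and only if both vanish, namely $x^T(Ay-E^Tp)=0$ and $(x^T(-A)+\lambda\beta^T-q^TF)y=0$. By the equivalence (\ref{eq45}), these two conditions together are exactly membership in $\mathcal N$, which completes the proof.

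The argument is entirely routine; the only point requiring care is the bookkeeping in the cancellation and the two substitutions $x^TE^Tp=e^Tp$, $q^TFy=f^Tq$, where one must keep track of which equality constraint ($Ex=e$ or $Fy=f$) lives in which polytope. I do not anticipate a genuine obstacle, since the structure of $Q'$ was designed so that replacing $B=-A+\alpha\beta^T$ by $-A$ together with the auxiliary term $\lambda\beta^T$ reproduces the same cancellation that made the $P\times Q$ computation work.
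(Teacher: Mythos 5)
Your proof is correct and follows exactly the route the paper intends (the paper leaves the lemma as "follows directly from the definitions and (\ref{eq45})", which is precisely the computation you carried out): sum the two individually nonpositive complementarity expressions, cancel the $x^TAy$ terms, substitute $Ex=e$ and $Fy=f$, and note that equality forces both summands to vanish, which is membership in $\mathcal N$ by (\ref{eq45}). No gaps.
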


Lemma \ref{le2} implies that $\mathcal N$ is captured by $\lambda(\beta^Ty)-e^Tp-f^Tq=0$ over $P\times Q'$. Using this fact
and Lemma \ref{le2}, we define the following parametrized LP.

\[
\begin{array}{rl}
LP(\delta)\ -\ \mbox{max:} & \delta(\beta^Ty)-e^Tp-f^Tq \\
&\mbox{s.t. } \ \ P \times Q';\ \ \lambda=\delta
\end{array}
\]

For an $a \in \Real$, let $OPT(a)$ be the set of optimal solutions of $LP(a)$ and $\mathcal N(a)$ be the set of points of $\mathcal N$ with
$\lambda=a$, {\em i.e.,} $\mathcal N=\{(v,w) \in \mathcal N\ |\ w=(x,\lambda,q)\mbox{ and } \lambda=a\}$.

\begin{lemma}\label{le3}
For an $a \in \Real$, $\mathcal N(a)\neq \emptyset$ and $OPT(a)=\mathcal N(a)$
\end{lemma}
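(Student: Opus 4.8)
The plan is to prove the two assertions separately: I would establish the equality $OPT(a)=\mathcal N(a)$ by reducing the objective of $LP(a)$ to the expression controlled by Lemma \ref{le2}, and establish $\mathcal N(a)\neq\emptyset$ by a direct existence argument that exploits the freedom to choose the parameter $\alpha$.

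First I would dispose of $OPT(a)=\mathcal N(a)$, granting nonemptiness for the moment. On the feasible region of $LP(a)$ every point satisfies $\lambda=a=\delta$, so the objective $\delta(\beta^Ty)-e^Tp-f^Tq$ coincides pointwise with the quantity $\lambda(\beta^Ty)-e^Tp-f^Tq$ appearing in Lemma \ref{le2}. That lemma asserts this quantity is $\leq 0$ over all of $P\times Q'$, with equality exactly on $\mathcal N$. Hence the optimal value of $LP(a)$ is at most $0$, and a feasible point is optimal if and only if it lies in $\mathcal N$ and has $\lambda=a$, i.e.\ if and only if it lies in $\mathcal N(a)$. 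Consequently, once I know $\mathcal N(a)\neq\emptyset$, the optimal value is $0$, it is attained, and the set of maximizers is precisely $\mathcal N(a)$; that is $OPT(a)=\mathcal N(a)$.

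The crux is therefore showing $\mathcal N(a)\neq\emptyset$ for every $a\in\Real$, and here the key observation is that I may choose $\alpha$ freely. I would exploit that $X=\{x : Ex=e,\ x\geq 0\}$ forces every linear functional in the row space of $E$ to be constant on $X$: if $\alpha=E^T\mu$, then $x^T\alpha=(Ex)^T\mu=e^T\mu$ for all $x\in X$. Given a target $a$, I would choose $\mu$ with $e^T\mu=a$, which is possible since $e\neq 0$ (the equalities are assumed linearly independent, and $e=0$ would force the compact set $X$ to be the degenerate singleton $\{0\}$), and set $\alpha=E^T\mu$. By Proposition \ref{existence} the game $G(\alpha)$ has a Nash equilibrium $(x,y)$, and by the second bullet of Lemma \ref{le1} there is a pair $(v,w)\in\mathcal N$ with $\lambda=x^T\alpha$. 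Since $x^T\alpha=e^T\mu=a$, we obtain $(v,w)\in\mathcal N(a)$, which is exactly the claimed nonemptiness.

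The main obstacle is this existence step, specifically recognizing that an arbitrary value of $\lambda$ can be forced by placing $\alpha$ in the row space of $E$, so that $x^T\alpha$ becomes a constant independent of the equilibrium $x$ it induces; this breaks the apparent circularity whereby $\alpha$ determines the game, the game determines $x$, and $x$ determines $x^T\alpha$. The only remaining care is the boundary case $e=0$, which I would dismiss as degenerate. With nonemptiness in hand, the combination of Proposition \ref{existence} and Lemma \ref{le1} completes that half, and the $OPT(a)=\mathcal N(a)$ part then follows mechanically from Lemma \ref{le2}.
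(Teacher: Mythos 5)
Your proof is correct, and its overall skeleton matches the paper's: nonemptiness is obtained by picking a parameter $\alpha$ that forces $x^T\alpha=a$ and then invoking existence of a NE of $G(\alpha)$ together with the second bullet of Lemma \ref{le1}, and $OPT(a)=\mathcal N(a)$ then falls out of Lemma \ref{le2} exactly as you describe. The one genuine difference is the choice of $\alpha$. The paper takes $\alpha=[a,\dots,a]$ and asserts $x^T\alpha=a$, which implicitly uses $\sum_i x_i=1$ on $X$; that normalization holds for bimatrix-style strategy sets but not for a general polytope $X=\{x: Ex=e,\ x\ge 0\}$ (e.g.\ the sequence form), where $x^T\alpha=a\sum_i x_i$ need not be constant. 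Your choice $\alpha=E^T\mu$ with $e^T\mu=a$ makes $x^T\alpha=(Ex)^T\mu=e^T\mu=a$ identically on $X$, which is exactly the constancy the argument needs and is valid for arbitrary strategy polytopes (modulo the degenerate case $e=0$, which you correctly flag and which forces $X=\{0\}$ by compactness). So your version is not just equivalent but slightly more careful than the paper's at the one point where the paper's proof is loose.
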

\begin{proof}
Consider a game $G(\alpha)$ where $\alpha=[a,\dots,a]$. Clearly, for any Nash equilibrium $(x,y)$ of $G(\alpha)$ the
corresponding point in $\mathcal N$ has $\lambda=x^T\alpha=a$ (Lemma \ref{le1}). Therefore, $\mathcal N(a)\neq \emptyset$.
The feasible set of $LP(a)$ is all points of $P\times Q'$ with $\lambda=a$. Further, function 
$\lambda(\beta^Ty)-e^Tp-f^Tq$ achieves maximum only at points on $\mathcal N$ (Lemma \ref{le2}).
Therefore, $OPT(a)=\mathcal N(a)$ follows. \qed
\end{proof}

\begin{lemma}\label{le4}
The set $\mathcal N$ forms an infinite path, with $\lambda$ being monotonic on it.
\end{lemma}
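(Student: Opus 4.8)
The plan is to combine the combinatorial structure of $\mathcal N$ already established (every vertex has degree two, so $\mathcal N$ is a disjoint union of cycles and bi-infinite paths) with the optimization description of its level sets coming from Lemmas \ref{le2} and \ref{le3}. The single extra ingredient I would extract is that for each $a$ the slice $\mathcal N(a)$ is not merely nonempty but \emph{connected and convex}: by Lemma \ref{le3} we have $\mathcal N(a)=OPT(a)$, and $OPT(a)$ is the set of maximizers of the linear objective $a(\beta^Ty)-e^Tp-f^Tq$ over the slice $\{(v,w)\in P\times Q':\lambda=a\}$, i.e.\ a nonempty face of that polyhedron (nonempty and bounded since the optimal value is $0$). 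Every face is convex, hence connected. Thus $\lambda$, viewed as a continuous coordinate function on $\mathcal N$, has connected level sets.

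First I would use this to prove monotonicity on each connected component. A continuous real function on a path (a copy of $\Real$) or on a cycle (a copy of $S^1$) whose every level set is connected can have no strict local extremum: just below a local maximum the level set would split into two separated points. Hence $\lambda$ is weakly monotone along every path component, and on a cycle $\lambda$ would have to be constant; but a constant value $a_0$ would make $\mathcal N(a_0)$ contain the whole loop, which is not convex, contradicting that $\mathcal N(a_0)=OPT(a_0)$ is a face. Therefore $\mathcal N$ contains no cycles, and $\lambda$ is monotone on each of its bi-infinite path components.

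Next I would show there is exactly one component. If two components $C_i\neq C_j$ both met the level $\lambda=a$ for some $a$, then $\mathcal N(a)$ would contain one point from each and hence be disconnected, contradicting the connectedness just used; so distinct components have \emph{disjoint} $\lambda$-ranges. Each range is an interval (continuous monotone image of a connected set), and by Lemma \ref{le3} the ranges together cover all of $\Real$. Along either end of a bi-infinite path the points run along an unbounded edge, on which $\lambda$ is affine; monotonicity then forces $\lambda\to+\infty$ at one end and $\lambda\to-\infty$ at the other, since a flat end-ray (zero $\lambda$-component) would pin the range to a closed finite endpoint and, together with the disjointness and covering conditions, force a second component to approach the same value, reintroducing a disconnected level set. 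Hence a single path already has $\lambda$-range equal to $\Real$, leaving no room for a second component, and we conclude that $\mathcal N$ is one bi-infinite path on which $\lambda$ is monotone.

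The main obstacle is this monotonicity/uniqueness step rather than the mere existence of an infinite path (the latter is immediate once cycles are excluded and $\lambda$ is seen to be unbounded via Lemma \ref{le3}). The two delicate points are (i) correctly handling the plateaus, i.e.\ the edges of $\mathcal N$ along which $\lambda$ is constant, which make the monotonicity only weak and the level sets occasionally full faces rather than single points, and (ii) ruling out an unbounded edge whose direction has zero $\lambda$-component at a finite end of a path's $\lambda$-range. Both are resolved by the structural fact that, via $\mathcal N(a)=OPT(a)$, every level set $\mathcal N(a)$ is a single connected convex face.
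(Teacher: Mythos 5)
Your proof is correct and follows essentially the same route as the paper: convexity/connectedness of each level set $\mathcal N(a)=OPT(a)$ rules out cycles and forces monotonicity of $\lambda$ on each component, and disjointness of the components' $\lambda$-ranges together with the requirement that they cover all of $\Real$ and be closed at finite endpoints rules out a second path. The only cosmetic slip is in your final sentence: the clean contradiction is that a component's $\lambda$-range, being a finite union of closed intervals coming from its bounded edges and affine rays, must actually \emph{contain} its finite endpoint $b$, so $\mathcal N(b)$ would contain points of two distinct components and could not be convex --- your phrase ``reintroducing a disconnected level set'' presupposes this attainment, which is precisely what the paper's appeal to $\mathcal P_2$ being a closed (polyhedral) set supplies.
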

\begin{proof}
To the contrary suppose there are cycles and multiple paths in $\mathcal N$. Let $\mathcal C$ be a cycle in $\mathcal N$.
It is easy to see that $\mathcal N(a)$ = intersection of $\mathcal N$ with the hyper-plane $\lambda=a$. Therefore, $\exists a \in \Real$,
such that either $\mathcal C$ is contained in $\lambda=a$ or it cuts the cycle at exactly two points. This contradicts that $\mathcal N(a)$
is a convex set in both the cases (Lemma \ref{le3}).

Now let $\mathcal P_1$ and $\mathcal P_2$ be two paths in $\mathcal N$. Since, $\mathcal N(a)$ is a convex set $\forall a\in \Real$,
$\lambda$ is monotonic on both the paths. Suppose, the range of $\lambda$ covered by $\mathcal P_1$ and $\mathcal P_2$ be $(-\infty,a]$ and
$(a,\inf)$. However, this contradicts the fact that $\mathcal P_2$ is a closed set. 
Monotonicity of $\lambda$ follows from the convexity of $\mathcal N(a)$.
\qed
\end{proof}

\subsection{Algorithm}
Let $(A,B)$ be a given rank-$1$ game and $A+B=\gamma\cdot \beta^T$. Let $\gamma_{min}=\min_{x \in X} \sum_{i \in S_1} \gamma_ix_i$ and
$\gamma_{max}=\max_{x \in X} \sum_{i \in S_1} \gamma_ix_i$. The $\gamma_{min}$ and $\gamma_{max}$ exists since $X$ is a bounded polytope.
From Lemma \ref{le1} it is clear that every point in the intersection of the set
$\mathcal N$ and hyper-plane $H_\gamma: \lambda - \sum_{i \in S_1}\gamma_i x_i=0$ corresponds to a NE of the given game
$(A,B)$.  Note that for any point in the intersection, corresponding $\lambda$ is between $\gamma_{min}$ and $\gamma_{max}$. 
Let $H_\gamma^-$ and $H_\gamma^+$ be the negative
and positive half spaces of the hyper-plane $H_\gamma$ respectively, then clearly $\mathcal N(\gamma_{min})\in H_\gamma^-$ and  $\mathcal
N(\gamma_{max})\in H_\gamma^+$. All the points in the intersection of $\mathcal N$ and $H_\gamma$ are between $\mathcal N(\gamma_{min})$ and
$\mathcal N(\gamma_{max})$. The following algorithm does binary search on $\mathcal N$ between $\mathcal N(\gamma_{min})$
and $\mathcal N(\gamma_{max})$ to find a point in the intersection using the fact that $\lambda$ monotonically increases
(similar to the algorithm in \cite{agms}). 

\begin{itemize}
\item[$\bs{S_1}$]Initialize $a_1=\gamma_{min}$ and $a_2=\gamma_{max}$.
\item[$\bs{S_2}$]If the edge containing $\mathcal N(a_1)$ or $\mathcal N(a_2)$ intersects $H_\gamma$, then output the intersection
and exit. 
\item[$\bs{S_3}$]Let $a=\frac{a_1+a_2}{2}$. Let $\overline{u,v}$ be the edge containing $\mathcal N(a)$.
\item[$\bs{S_4}$]If $\overline{u,v}$ intersects $H_\gamma$, then output the intersection and exit.
\item[$\bs{S_5}$]Else if $\overline{u,v} \in H^-_\gamma$, then set $a_1=a$ else set $a_2=a$ and continue from step $S_3$.
\end{itemize}

\noindent{\bf Correctness.}
Since the feasible set of $LP(a)$ is a section of $P\times Q'$, where $\lambda=a$, the $OPT(a)$ is on an edge of
$P\times Q'$ (assuming non-degeneracy of $LP(a)$). It is easy to construct this edge from the tight equations of
$P\times Q'$ at $OPT(a)$. Clearly, this entire edge should be part of the set $\mathcal N$, hence if this edge intersects the
hyper-plane $H_\gamma$ then we get a Nash equilibrium of the given game. Since, all
the points in the intersection of $\mathcal N$ and $H_\gamma$ are between $\mathcal N(\gamma_{min})$ and $\mathcal N(\gamma_{max}$), and
$\lambda$ monotonically increases between these two (Lemma \ref{le4}), the algorithm does a simple binary search between $\gamma_{min}$
and $\gamma_{max}$ to find an $a$, such that the edge containing $\mathcal N(a)=OPT(a)$ intersects $H_\gamma$ (Lemma
\ref{le3}).  \\ \\
\noindent{\bf Time Complexity.}
Recall that $\mathcal L$ is the bit length of the input game. 
Since, $\gamma_{min}$ and $\gamma_{max}$ are optimal points of two LPs on set $X=\{Ex=e, x\ge 0\}$, they can be represented in
$poly(\mathcal L,M,N)$ bits. 
Let $Z=\max\{|A|,|E|,|F|,|e|,|f|,|\gamma|,|\beta|\}$, $l=M+N+k_1+k_2+1$, and $\Delta=l! Z^l$.

\begin{theorem}
The above algorithm finds a NE of game $(A,B)$ in time $poly(\mathcal L,M,N)$.
\end{theorem}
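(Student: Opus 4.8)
The plan is to establish two facts and multiply them: each iteration of the binary search runs in polynomial time, and the number of iterations is polynomially bounded. Correctness has already been argued in the preceding paragraph (using the monotonicity of $\lambda$ on $\mathcal N$ from Lemma~\ref{le4} and the identification $OPT(a)=\mathcal N(a)$ from Lemma~\ref{le3}), so the whole task reduces to a complexity estimate.

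First I would bound the cost of a single iteration. Each step solves $LP(a)$ for a value $a$ obtained by repeated bisection of $[\gamma_{min},\gamma_{max}]$. Since $\gamma_{min},\gamma_{max}$ are optima of linear programs over $X=\{Ex=e,\,x\ge0\}$, they are attained at vertices of $X$ and hence representable in $poly(\mathcal{L},M,N)$ bits; after $t$ bisections $a$ is a dyadic rational of $O(t+poly(\mathcal{L},M,N))$ bits. Thus $LP(a)$ is a linear program whose size is polynomial in $\mathcal{L},M,N,t$ and is solvable in polynomial time. Given $OPT(a)$, the edge of $P\times Q'$ containing it is recovered as the $1$-dimensional face cut out by the tight constraints, its two endpoints are computed by linear algebra, and the test ``does this edge meet $H_\gamma$'' reduces to checking whether its endpoints lie on opposite sides of $H_\gamma$. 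All of this is $poly(\mathcal{L},M,N,t)$.

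The crux is bounding the number of iterations $t$, and this is where the main obstacle lies. Every vertex of $P\times Q'$ sits in $\Real^{l}$ with $l=M+N+k_1+k_2+1$ and is the unique solution of $l$ linearly independent tight constraints whose integer coefficients (and right-hand sides $e,f,0$) are bounded in absolute value by $Z$. By Cramer's rule each coordinate of such a vertex, in particular its $\lambda$-coordinate, is a ratio of $l\times l$ determinants, hence a rational with numerator and denominator bounded by $\Delta=l!\,Z^{l}$. Consequently any two \emph{distinct} $\lambda$-values attained at vertices differ by at least $1/\Delta^{2}$, and the same determinant bound gives $\gamma_{max}-\gamma_{min}\le 2MZ\Delta$. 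Because $\lambda$ is monotone along the path $\mathcal N$, its edges partition the relevant range into $\lambda$-intervals whose breakpoints are the vertex values, so consecutive breakpoints are $\ge 1/\Delta^{2}$ apart. Once the bracketing interval $[a_1,a_2]$ maintained by the search has width below $1/\Delta^{2}$ it contains at most one breakpoint, so one or two further bisections isolate a single edge straddling $H_\gamma$, which step $S_4$ then detects. Hence the search terminates after $t=O\!\left(\log\big((\gamma_{max}-\gamma_{min})\,\Delta^{2}\big)\right)$ bisections; crucially, this count is governed by the separation bound, not by the (possibly exponential) number of vertices of $P\times Q'$.

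It remains to check that this bound is polynomial. Since $\log\Delta=\log(l!\,Z^{l})=O(l\log l+l\log Z)$ with $l=M+N+k_1+k_2+1$ and $\log Z=O(\mathcal{L})$, we obtain $\log\big((\gamma_{max}-\gamma_{min})\Delta^{2}\big)\le\log(2MZ\Delta^{3})=poly(\mathcal{L},M,N)$, so $t=poly(\mathcal{L},M,N)$. Combining with the per-iteration bound, the total running time is $poly(\mathcal{L},M,N)$, as claimed. The hard part is genuinely the third paragraph: turning integrality and non-degeneracy of the data into a polynomial lower bound $1/\Delta^{2}$ on the gap between successive breakpoints of $\lambda$ along $\mathcal N$, which is exactly what makes binary search over a continuum terminate after polynomially many steps.
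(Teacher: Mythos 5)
Your proof is correct and follows essentially the same route as the paper's: polynomial cost per iteration via solving $LP(a)$, and a bound of $\log(\gamma_{max}-\gamma_{min})+2\log\Delta+O(1)$ on the number of bisections obtained from the Cramer's-rule separation bound $1/\Delta^2$ on the $\lambda$-extent of edges of $\mathcal N$. The only differences are minor elaborations (the explicit bound $\gamma_{max}-\gamma_{min}\le 2MZ\Delta$ and the bit-length of the bisection point $a$), which do not change the argument.
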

\begin{proof}
One round of steps $S_3$ to $S_5$ can be done in polynomial time since computation of $\mathcal N(a)$ requires solving
$LP(a)$ (Lemma \ref{le3}), and computation of $\overline{u,v} \cap H_\gamma$ requires checking the feasibility of a
polytope. Now, to show polynomial time complexity, we need to bound the number of rounds of steps $S_3$ to $S_5$.

Note that the denominator of any co-ordinate of a vertex of $P\times Q'$ is at most $\Delta$, 
and if $\lambda$ is not constant on an edge of $\mathcal N$, then the difference in its value between the two end points
of the edge is at least $\frac{1}{\Delta^2}$. Therefore, if $a_2-a_1 < \frac{1}{\Delta^2}$ the algorithm terminates. After $k$
rounds $a_2-a_1=\frac{\gamma_{max}-\gamma_{min}}{2^k}$. In round $k$ if $a_2-a_1=\frac{\gamma_{max}-\gamma_{min}}{2^k} >
\frac{1}{\Delta^2}$, then $k<\log(\gamma_{max}-\gamma_{min})+2\log\Delta$. Therefore, the algorithm is guaranteed to terminate after
$\log(\gamma_{max}-\gamma_{min})+2\log\Delta+1=poly(\mathcal L,M,N)$ many rounds. \qed
\end{proof}

\section{FPTAS for Rank-$k$ Games}\label{fptas}
In this section, we discuss fully polynomial time approximation schemes for fixed rank games ({\em i.e.,}
$rank(A+B)$ is constant). The approximation notion in bilinear games can be defined in a similar way to that of bimatrix
games given by Kannan et al. \cite{kannan09rank}.
Let $x_{max}=\max_{x \in X} \sum_i x_i$, $y_{max}=\max_{y \in Y} \sum_j y_j$ and $D=|A+B|$. 
Clearly the total payoff derived from a strategy profile $(x,y) \in X\times Y$ is at most $x_{max}Dy_{max}$. Using this we
define an $\epsilon$-approximate NE for a bilinear game $(A,B)$ as follows.

\begin{definition}\label{de1}
For a strategy profile $(x,y) \in X\times Y$, let $u=\max_{x' \in X} x'^TAy$ and $v=\max_{y' \in Y} x^TBy'$. Then $(x,y)$ is
an $\epsilon$-approximate NE of the game $(A,B)$ if $u+v-x^T(A+B)y\le\epsilon (x_{max}Dy_{max})$.
\end{definition}

For a bimatrix game $x_{max}Dy_{max}=D$, since $x$ and $y$ are probability distributions, which is compatible with the definition of
\cite{kannan09rank}. 
Next we define a stronger notion of $\epsilon$-approximate NE called {\em relative} $\epsilon$-approximate NE, where the error is relative
to the maximum achievable payoff from the given strategy.

\begin{definition}\label{de2}
For a strategy profile $(x,y)\in X\times Y$, let $u=\max_{x' \in X} x'^TAy$ and $v=\max_{y' \in Y} x^TBy'$. Then $(x,y)$ is
a {\em relative} $\epsilon$-approximate NE of the game $(A,B)$ if $u+v-x^T(A+B)y\le\epsilon (u+v)$, {\em i.e.,} the total error is relatively small.
\end{definition}

Since the value of $u+v$ is at most $x_{max}Dy_{max}$, if $(x,y)$ is relative $\epsilon$-approximate NE, then it is also
$\epsilon$-approximate NE. 
For all the examples mentioned in Section \ref{examples}, a (relative) approximate NE of the bilinear game formulation can be
straightforwardly turned into a (relative) approximate NE of the corresponding finite game under standard definitions.
Without loss of generality we assume that $A,B,E,F,e$ and $f$ are integer matrices, since scaling them by a positive value does not change
the set of (relative) $\epsilon$-approximate NE. 
Next we discuss two FPTAS to solve QP of (\ref{eq4}), one for each definition of
approximation. The approaches used in these algorithms are generalization of the ones in \cite{kannan09rank}. 

\subsection{FPTAS for Approximate NE}
We show that the result by Vavasis \cite{vavasis} can be applied to get an $\epsilon$-approximate Nash equilibrium
(Definition \ref{de1}).  The following proposition states the result by Vavasis.

\begin{proposition}\label{propvavasis}
Let $min\{\frac{1}{2}x^TQx+q^Tx:Ax\le b\}$ be a quadratic optimization problem with compact polytope $\{x\in \Real^n :
Ax\le b\}$, and let the rank of $Q$ be a fixed constant. If $x^*$ and $x^\#$ denote points minimizing and maximizing the
objective function $f(x)=\frac{1}{2}x^TQx+q^Tx$ in the feasible
region, respectively, then one can find in time $poly(\mathcal L, \frac{1}{\epsilon})$ a point $x^\diamond$ satisfying 
\[
f(x^\diamond)-f(x^*) \le \epsilon(f(x^\#) - f(x^*)).
\]
\end{proposition}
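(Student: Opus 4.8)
The plan is to exploit that the objective is genuinely low-dimensional: although $x$ ranges over $\Real^n$, the quadratic part of $f$ sees $x$ only through $r := \mathrm{rank}(Q)$ linear functionals. First I would take a spectral decomposition $Q = \sum_{i=1}^{r}\lambda_i v_i v_i^T$ with the $v_i$ orthonormal, set $t_i := v_i^T x$, and split the linear term as $q = \sum_{i=1}^r \mu_i v_i + \tilde q$ with $\tilde q \perp v_i$ for all $i$. Then $f(x) = \Phi(t_1,\dots,t_r) + \tilde q^T x$, where $\Phi(t) = \frac12\sum_{i=1}^r \lambda_i t_i^2 + \sum_{i=1}^r \mu_i t_i$ is an explicit separable quadratic on $\Real^r$ and the remaining dependence on $x$ is purely linear. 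Since the polytope $\{Ax\le b\}$ is compact, solving the $2r$ linear programs $\min/\max\{v_i^T x : Ax \le b\}$ yields an axis-aligned box $\mathcal B \supseteq \{(v_1^T x,\dots,v_r^T x) : Ax \le b\}$ whose corners have bit-length $poly(\mathcal L)$.

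Next I would grid $\mathcal B$ into cells of side $h$ and, for each cell $C$, solve the linear program $\min\{\tilde q^T x : Ax \le b,\ t(x) \in C\}$ (the membership $t(x) \in C$ amounts to two linear inequalities per coordinate, so this remains an LP); cells whose LP is infeasible are discarded. Writing $\bar x_C$ for the optimizer, I would finally return $x^\diamond := \argmin_C f(\bar x_C)$, evaluating $f$ exactly at each $\bar x_C$. Using box-shaped cells rather than exact slices $\{t(x)=\text{const}\}$ is deliberate: the cell $C^*$ containing $t^* := t(x^*)$ automatically contains $x^*$ itself, so its LP is feasible and $\tilde q^T \bar x_{C^*} \le \tilde q^T x^*$. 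For fixed $r$ the number of cells is $(\mathrm{diam}(\mathcal B)/h)^{r}$, so the whole scheme runs in time polynomial in $\mathcal L$, $1/h$, and the LP sizes.

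For the error bound, let $R := f(x^\#) - f(x^*)$ and let $\omega$ denote the maximal oscillation of $\Phi$ across a single cell. Since $t(\bar x_{C^*})$ and $t^*$ both lie in $C^*$,
\[
f(\bar x_{C^*}) = \Phi(t(\bar x_{C^*})) + \tilde q^T \bar x_{C^*} \le \Phi(t^*) + \tilde q^T x^* + \omega = f(x^*) + \omega,
\]
so $f(x^\diamond) - f(x^*) \le \omega$. It remains to choose $h$ so that $\omega \le \epsilon R$ while keeping the cell count $poly(1/\epsilon)$. Because $\nabla \Phi(t) = (\lambda_i t_i + \mu_i)_i$ is bounded over $\mathcal B$ by quantities of bit-length $poly(\mathcal L)$, $\omega$ is at most a $poly(\mathcal L)$ factor times $h$, and the number of cells becomes $(1/\epsilon)^{O(r)}$ after calibrating $h$. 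The genuinely delicate step---and the technical heart of Vavasis's argument---is to carry out this calibration so that the \emph{relative} bound $\omega \le \epsilon R$ holds simultaneously with a $poly(\mathcal L, 1/\epsilon)$ cell count: one must control the per-cell variation of the objective \emph{measured against its global range $R$} rather than in absolute terms, which is exactly what makes the fixed-rank hypothesis indispensable and what the relative normalization in the statement is designed to accommodate.
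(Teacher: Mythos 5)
First, a point of reference: the paper does not prove this proposition at all --- it is quoted verbatim as a known theorem of Vavasis \cite{vavasis} and invoked as a black box, so there is no internal proof to compare yours against. Judged on its own terms, your sketch reproduces the correct high-level architecture of Vavasis's algorithm: reduce the quadratic dependence to $r=\mathrm{rank}(Q)$ linear functionals, grid their joint range into $k^r$ cells, solve one LP per cell, and return the best point found. Your argument that the cell $C^*$ containing $t(x^*)$ is feasible and yields $f(x^\diamond)-f(x^*)\le\omega$ is sound. The genuine gap is exactly the step you set aside as ``the technical heart of Vavasis's argument,'' and it cannot be set aside, because the calibration you do offer provably fails. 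From $\omega\le\|\nabla\Phi\|_\infty\,h$ with $\|\nabla\Phi\|_\infty$ of bit-length $poly(\mathcal L)$ (hence of magnitude up to $2^{poly(\mathcal L)}$), forcing $\omega\le\epsilon R$ requires $h\le \epsilon R\,2^{-poly(\mathcal L)}$; since $R=f(x^\#)-f(x^*)$ may itself be as small as $2^{-poly(\mathcal L)}$ while $\mathrm{diam}(\mathcal B)$ is as large as $2^{poly(\mathcal L)}$, the cell count $(\mathrm{diam}(\mathcal B)/h)^r$ is then exponential in $\mathcal L$, not $poly(\mathcal L,1/\epsilon)$. An absolute Lipschitz estimate is simply the wrong tool for a \emph{relative} error guarantee, so the assertion that ``the number of cells becomes $(1/\epsilon)^{O(r)}$ after calibrating $h$'' does not follow from anything you have established.

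What is missing is the actual content of Vavasis's proof. One ingredient is a scale-free one-dimensional lemma: if a quadratic $\phi(s)=as^2+bs$ is restricted to the interval $[l,u]$ of achievable values of a linear functional and that interval is split into $k$ \emph{equal} parts, the oscillation of $\phi$ on each part is at most $O(1/k)$ times $\max_{[l,u]}\phi-\min_{[l,u]}\phi$, with no dependence on the magnitudes of $a,b,l,u$; this is what permits $k=poly(r,1/\epsilon)$. But even this does not finish the argument, because the global range $R$ of $f$ need not dominate the ranges of the individual rank-one terms: with $f=\tfrac12 t_1^2-\tfrac12 t_2^2$ and a feasible region on which $t_2=t_1+\eta u$ for tiny $\eta$, each term has range $\Theta(1)$ while $R=\Theta(\eta)$, so per-term bounds measured against their own ranges can still exceed $\epsilon R$ by an exponential factor. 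Controlling the per-cell variation of $f$ \emph{restricted to the feasible region} against $R$ is the delicate combinatorial/analytic core of \cite{vavasis}, and your write-up names it without supplying it. (A further, fixable, technicality: the spectral decomposition of $Q$ is generally irrational, which breaks the bit-complexity accounting for the LPs; one should use a rational congruence decomposition $Q=\sum_i\pm c_ic_i^T$ obtained by symmetric Gaussian elimination.) As it stands, therefore, the proposal is an accurate outline of the algorithm but not a proof of the stated $poly(\mathcal L,1/\epsilon)$ bound.
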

Now consider the following QP formulation of (\ref{eq4}), which captures all the NE of $(A,B)$ at its optimal.
\[
\begin{array}{rl}
\mbox{min:} & e^Tp+f^Tq-x^T(A+B)y \\
\mbox{s.t. } & Ay - E^Tp \le 0;\ \ Fy=f;\ \ y\ge 0 \\
& x^TB-q^TF \le 0;\ \ Ex=e;\ \ x\ge 0
\end{array} 
\]

\begin{theorem}
Let $(A,B)$ be a rank-$k$ game, then for every $\epsilon>0$, an $\epsilon$-approximate Nash equilibrium can be computed in
time $poly(\mathcal L,\frac{1}{\epsilon})$, where $\mathcal L$ is the bit length of the game and $k$ is a constant.
\end{theorem}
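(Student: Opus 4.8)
The plan is to put the minimization QP displayed immediately above the theorem into the form $\min\{\frac12 z^TQz+c^Tz : z\in\mathcal F\}$ required by Vavasis' result (Proposition \ref{propvavasis}) and then extract an approximate equilibrium from the point it returns. First I would collect all variables into $z=(x,y,p,q)$, so that the objective $e^Tp+f^Tq-x^T(A+B)y$ splits into a purely linear part $e^Tp+f^Tq$ (the vector $c$) and a single bilinear quadratic part $-x^T(A+B)y$. The feasible region $\mathcal F$ is the product $P\times Q$ of (\ref{eq2}), i.e.\ the constraints $Ay-E^Tp\le0$, $Fy=f$, $y\ge0$, $x^TB-q^TF\le0$, $Ex=e$, $x\ge0$.

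The crucial structural step is the rank bound on the Hessian. Writing $\frac12 z^TQz=-x^T(A+B)y$ forces
\[
Q=\left[\begin{array}{cccc} 0 & -(A+B) & 0 & 0\\ -(A+B)^T & 0 & 0 & 0\\ 0 & 0 & 0 & 0\\ 0 & 0 & 0 & 0\end{array}\right].
\]
Appending the zero blocks for $p,q$ does not change the rank, and an anti-diagonal block matrix $\begin{pmatrix}0&C\\ C^T&0\end{pmatrix}$ has rank exactly $2\,rank(C)$; hence $rank(Q)=2\,rank(A+B)=2k$, a fixed constant whenever $k$ is. This is precisely the hypothesis of Proposition \ref{propvavasis}.

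Granting compactness (treated next), applying Proposition \ref{propvavasis} yields, in time $poly(\mathcal L,\frac1\epsilon)$, a feasible $z^\diamond=(x,y,p,q)$ with $f(z^\diamond)-f(z^*)\le\epsilon\,(f(z^\#)-f(z^*))$, where $f$ is the objective and $z^*,z^\#$ are its minimizer and maximizer. By the derivation leading to (\ref{eq4}) the minimum value is $f(z^*)=0$, attained exactly at the equilibria. Moreover weak LP duality for the primal–dual pair (\ref{pd}) gives, at every feasible point, $e^Tp\ge\max_{x'\in X}x'^TAy=u$ and $f^Tq\ge v$, whence $u+v-x^T(A+B)y\le f(z^\diamond)$. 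Therefore, provided the range $f(z^\#)-f(z^*)$ is bounded by the total payoff magnitude $x_{max}Dy_{max}$, we obtain $u+v-x^T(A+B)y\le\epsilon\,x_{max}Dy_{max}$, i.e.\ $(x,y)$ is an $\epsilon$-approximate NE in the sense of Definition \ref{de1}.

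The hard part, and the step I expect to be most delicate, is making $\mathcal F$ a compact polytope while simultaneously (i) retaining at least one equilibrium and (ii) keeping the objective's range within $x_{max}Dy_{max}$. Since $p$ and $q$ are bounded below but not above by $Ay\le E^Tp$ and $x^TB\le q^TF$, the set $P\times Q$ is unbounded and $f$ is unbounded on it, so I must truncate with box constraints on $(p,q)$. The subtlety is that a fully-labeled vertex pair (an exact NE, Lemma \ref{ne_le}) has coordinates bounded only in bit-length, so a box large enough to contain it also tends to inflate $f(z^\#)$; reconciling this with the requirement that the range stay comparable to $x_{max}Dy_{max}$ rather than to the larger scale $(|A|+|B|)x_{max}y_{max}$ is exactly where the best-response (dual multiplier) magnitudes must be controlled against $D=|A+B|$, using the integrality and compactness assumptions on $A,B,E,F,e,f$. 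Once the range is pinned to $x_{max}Dy_{max}$, the computation of the previous paragraph completes the proof.
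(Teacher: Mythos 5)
Your proposal follows the paper's proof almost exactly: same QP, same Hessian with $rank(Q)=2\,rank(A+B)=2k$ (your anti-diagonal block argument is a correct and slightly more explicit justification than the paper gives), same use of Proposition \ref{propvavasis} with $f(z^*)=0$, and the same weak-duality step $e^Tp\ge\max_{x'\in X}x'^TAy$, $f^Tq\ge\max_{y'\in Y}x^TBy'$ to convert the objective gap into the approximation guarantee of Definition \ref{de1}. So the route is the right one.

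However, the step you flag as ``the hard part'' is genuinely left open in your write-up, and it does require one further idea that you do not supply. You pose the problem as a trade-off: the box on $(p,q)$ must be large enough to contain an equilibrium vertex, but a large box seems to inflate $f(z^\#)$ beyond $x_{max}Dy_{max}$. The paper breaks this trade-off by decoupling the two roles. It adds the \emph{extra linear constraint} $e^Tp+f^Tq\le x_{max}Dy_{max}$ to the feasible region; this is a valid cut because at every point satisfying (\ref{eq3}) one has $e^Tp+f^Tq=x^T(A+B)y\le x_{max}Dy_{max}$, so no Nash equilibrium is removed, and it immediately caps the objective by $e^Tp+f^Tq-x^T(A+B)y\le 2x_{max}Dy_{max}$ \emph{independently of how large the box on $(p,q)$ is}. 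The box $-l!Z^l\le p\le l!Z^l$, $-l!Z^l\le q\le l!Z^l$ (with $Z=\max\{|A|,|B|,|E|,|F|,|e|,|f|\}$ and $l=M+N+k_1+k_2$) is then needed only to make the polytope compact for Vavasis, and since equilibria occur at vertices whose coordinates are bounded by $l!Z^l$, it preserves the optimal set. With the range pinned at $2x_{max}Dy_{max}$ one runs Vavasis with accuracy $\epsilon/2$ (a factor you also omit) to get $e^Tp^\diamond+f^Tq^\diamond-{x^\diamond}^T(A+B)y^\diamond\le\epsilon\,x_{max}Dy_{max}$, and your final weak-duality computation finishes the proof. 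Without this valid cut (or an equivalent a priori bound on the dual multipliers), your argument as written does not establish that $f(z^\#)-f(z^*)$ is comparable to $x_{max}Dy_{max}$, so the conclusion does not yet follow.
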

\begin{proof} 
The objective function of the above QP can be easily transformed to the standard QP form $\frac{1}{2}x^TQx+q^Tx$, where
$rank(Q)=2k$. To apply Proposition \ref{propvavasis} on this QP, we need to bound its feasible set. Since, $\{x : Ex=e,\ x\ge 0\}$
and $\{y : Fy=f,\ y\ge0\}$ are compact, the only variables to bound are $p$s and $q$s. Since, the maximum possible value
of $x^T(A+B)y$ for any $(x,y) \in X\times Y$ is $x_{max}Dy_{max}$, the value of $e^Tp+f^Tq$ is at most $x_{max}Dy_{max}$ at
any point of the polytope corresponding to NE (by (\ref{eq3})). Therefore, we
impose $e^Tp+f^Tq\le x_{max}Dy_{max}$. However, this may not bound the $p$s and $q$s. 

Let $Z=\max\{|A|,|B|,|E|,|F|,|e|,|f|\}$ and $l=M+N+k_1+k_2$. 
Recall that NE of a non-degenerate game correspond to vertices of the polytope. It is easy to see that maximum absolute value of a
co-ordinate of any vertex in the polytope is at most $l! Z^l$. Further, the quantity $l! Z^l$ can be represented in $poly(\mathcal L)$ bits.
Therefore, imposing $-l! Z^l \le p \le l!Z^l$ and $-l!Z^l \le q \le l!Z^l$ in the above QP incur only a polynomial increase
in its representation and does not change its optimal set. 
The minimum and the maximum objective values of this QP are zero (Lemma \ref{le2}) and at most $2x_{max}Dy_{max}$
respectively. 
Let $((y^\diamond, p^\diamond),(x^\diamond,q^\diamond))$ be the solution given by Vavasis algorithm for $\frac{\epsilon}{2}$, then 
from Proposition \ref{propvavasis} we get,
\[
e^Tp^\diamond + f^Tq^\diamond - {x^\diamond}^T (A+B) y^\diamond \le \epsilon (x_{max} D y_{max})
\]
From the primal-dual formulation of (\ref{pd}) it is clear that $\max_{x' \in X} x'^T A y^\diamond \le e^T p^\diamond$ and $\max_{y' \in Y}
{x^\diamond}^TBy' \le f^T q^\diamond$. Therefore, we get $\max_{x' \in X} x'^T A y^\diamond+max_{y' \in Y}
{x^\diamond}^TBy'- {x^\diamond}^T (A+B) y^\diamond \le \epsilon (x_{max} D y_{max})$. \qed
\end{proof}

\subsection{FPTAS for Relative Approximate NE}
Let the rank of a game $(A,B)$ be $k$, then $A+B=\sum_{i=1}^k \alpha(i)\beta(i)^T$, where $\forall i,\ \alpha(i) \in
\Real^M$ and $\beta(i) \in \Real^N$. We assume that the game is such that $\alpha(i)$s and $\beta(i)$s are positive vectors.
For all $i\le k$, let $w_i = \min_{x \in X} x^T\alpha(i)$ and $w'_i=\max_{x \in X} x^T\alpha(i)$, similarly let $z_i=\min_{y
\in Y} \beta(i)^Ty$ and $z'_i = \max_{y \in Y} \beta(i)^Ty$. 
Note that $w_i,w'_i,z_i$ and $z'_i$ can be represented by $poly(\mathcal L,M,N)$ bits, since $X$ and $Y$ are compact. 
Given an $\epsilon>0$, consider the sub-intervals $[w_i,(1+\epsilon)w_i]$, $[(1+\epsilon)w_i, (1+\epsilon)^2 w_i]$ of $[w_i, w'_i]$ and
similarly of $[z_i, z'_i]$. 
All combinations of these intervals form a grid in $2k$-dimensional box $\mathcal B=\times_i [w_i, w'_i]\times_i[z_i,
z'_i]$.  Let $(x,y)\in X\times Y$ be such that $\forall i, x^T\alpha(i) \in [u_i, (1+\epsilon)u_i]$ and $\beta(i)^Ty \in
[v_i, (1+\epsilon)v_i]$, then clearly, 
\begin{eqnarray}\label{eq55}
\sum_{i=1}^k u_i v_i \le x^T(A+B)y \le (1+\epsilon)^2 \sum_{i=1}^k u_i v_i
\end{eqnarray}

For a fixed hyper-cube of the grid, consider the following LP based on the QP of (\ref{eq4})
\[
\begin{array}{rl}
\mbox{min:} & e^Tp+f^Tq \\
\mbox{s.t. }& Ay\le  E^T p,\ \ Fy=f, \ \ y\ge 0 \\
& x^TB\le q^T F;\ \ Ex=e; \ \ x\ge 0 \\
& u_i \le x^T\alpha(i)\le (1+\epsilon) u_i;\ \ v_i \le \beta(i)^Ty\le (1+\epsilon) v_i,\ \ \forall i \\
\end{array}
\]

\noindent{\bf Algorithm.} Run the above LP for each hyper-cube of the grid, and output an optimal point of the one giving the best
approximation. 
As the number of hyper-cubes in the grid is $poly(\mathcal L,1/\log(1+\epsilon))$, the running time of the algorithm is
$poly(M,N,\mathcal L,1/\log(1+\epsilon))$. 
\vspace{0.2cm}

\noindent{\bf Correctness.}
Next we show that the above algorithm gives $\left(1-\frac{1}{(1+\epsilon)^2}\right)$-approximate NE of the game $(A,B)$. Let $(x',y')$ be a
NE of the given game, and $(p',q')$ be such that $x'^TAy'=e^Tp'$ and $x'^TBy'=f^Tq'$. Consider the hyper-cube containing
$(x'^T\alpha(1),\dots,x'^T\alpha(k),\beta(1)^Ty',\dots,\beta(k)^Ty')$ of the grid and corresponding LP. Clearly, $(x',y',p',q')$ is a
feasible point of this LP and
$\sum_{i=1}^k u_iv_i \le e^Tp' + f^Tq' \le (1+\epsilon)^2 \sum_{i=1}^k u_iv_i$, since
$e^Tp'+f^Tq'=x'^T(A+B)y'$. Therefore, at the optimal point $(\tilde{x},\tilde{y},\tilde{p},\tilde{q})$ of the LP we get
$e^T\tilde{p}+f^T\tilde{q} \le (1+\epsilon)^2 \sum_{i=1}^k u_iv_i$,
and this gives,

\[
\begin{array}{rl}
&\tilde{x}^T(A+B)\tilde{y} \ge \sum_{i=1}^k u_iv_i \ge \frac{e^T\tilde{p}+f^T\tilde{q}}{(1+\epsilon)^2} \ \ \ \ \ (\mbox{using }(\ref{eq55}))\\
\Rightarrow &e^T\tilde{p}+f^T\tilde{q} - \tilde{x}^T(A+B)\tilde{y} \le \left(1-\frac{1}{(1+\epsilon)^2}\right) (e^T\tilde{p}+f^T\tilde{q})\\
\end{array}
\]

Let $\mu=\left(1-\frac{1}{(1+\epsilon)^2}\right)$, $\tilde{u}=\max_{\gamma \in X} \gamma^TA\tilde{y}$, and $\tilde{v}=\max_{\gamma \in Y} \tilde{x}^TB\gamma$.
Clearly, $e^T\tilde{p}\ge \tilde{u}$ and $f^T\tilde{q} \ge \tilde{v}$ (using (\ref{pd})).  
Let $D=e^T\tilde{p}+f^T\tilde{q}-\tilde{u}-\tilde{v}$, then
$\tilde{u}+\tilde{v}-\tilde{x}^T(A+B)\tilde{y}=e^T\tilde{p}+f^T\tilde{q}-D-\tilde{x}^T(A+B)\tilde{y}\le \mu (e^T\tilde{p}+f^T\tilde{q}) - D
\le \mu (e^T\tilde{p}+f^T\tilde{q}-D)= \mu (\tilde{u}+\tilde{v})$, since $\mu \in (0,1)$. Therefore, 
$(\tilde{x},\tilde{y})$ is a relative $\mu$-approximate NE of the given game $(A,B)$ (Definition \ref{de2}). 

\begin{theorem}
Let $(A,B)$ be a rank-$k$ game, and $A+B=\sum_{i=1}^k \alpha(i)\beta(i)^T$, such that $\alpha(i)$s and $\beta(j)$s are positive vectors.
Then given an $\epsilon > 0$, a relative $\left(1-\frac{1}{(1+\epsilon)^2}\right)$-approximate NE can be computed in time
$poly(\mathcal L, 1/\log(1+\epsilon))$, 
where $\mathcal L$ is the input bit length. \qed
\end{theorem}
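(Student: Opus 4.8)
The plan is to exploit the rank-$k$ structure of $A+B$. Since $A+B=\sum_{i=1}^k\alpha(i)\beta(i)^T$, the bilinear term in the objective of the QP (\ref{eq4}) becomes $x^T(A+B)y=\sum_{i=1}^k(x^T\alpha(i))(\beta(i)^Ty)$, so it depends on the strategy profile only through the $2k$ scalar aggregates $x^T\alpha(i)$ and $\beta(i)^Ty$. The positivity of the $\alpha(i)$ and $\beta(i)$ is what makes the whole scheme work: it guarantees that each aggregate ranges over a strictly positive interval $[w_i,w'_i]$ (resp.\ $[z_i,z'_i]$), whose endpoints exist by compactness of $X$ and $Y$ and are representable in $poly(\mathcal L,M,N)$ bits. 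First I would discretize each of these $2k$ intervals \emph{geometrically}, cutting $[w_i,w'_i]$ at the points $w_i(1+\epsilon)^j$ and likewise for $[z_i,z'_i]$. This produces a grid of hyper-cubes inside the box $\mathcal B=\times_i[w_i,w'_i]\times_i[z_i,z'_i]$ whose total count is on the order of $\prod_i\big(\log(w'_i/w_i)/\log(1+\epsilon)\big)\big(\log(z'_i/z_i)/\log(1+\epsilon)\big)$, which for fixed $k$ is $poly(\mathcal L,1/\log(1+\epsilon))$.

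The central observation is that on a single hyper-cube, where $x^T\alpha(i)\in[u_i,(1+\epsilon)u_i]$ and $\beta(i)^Ty\in[v_i,(1+\epsilon)v_i]$, the bilinear term is pinned down \emph{multiplicatively}, i.e.\ $\sum_i u_iv_i\le x^T(A+B)y\le(1+\epsilon)^2\sum_i u_iv_i$; this is exactly inequality (\ref{eq55}), and positivity is again essential here so that products of the interval endpoints bound the product of the aggregates. Consequently, within a fixed hyper-cube the nonlinear part of (\ref{eq4}) is essentially constant, and I would replace the QP by the linear program that minimizes $e^Tp+f^Tq$ over the best-response constraints $Ay\le E^Tp$, $x^TB\le q^TF$ (with $Fy=f,y\ge0$, $Ex=e,x\ge0$) together with the box constraints $u_i\le x^T\alpha(i)\le(1+\epsilon)u_i$ and $v_i\le\beta(i)^Ty\le(1+\epsilon)v_i$. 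The algorithm solves one such LP per hyper-cube and returns the best solution found; since each LP is polynomial-sized and there are polynomially many of them, the running time is $poly(M,N,\mathcal L,1/\log(1+\epsilon))$.

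For correctness I would begin from an exact NE $(x',y')$, which exists by Proposition \ref{existence}, together with dual witnesses $p',q'$ satisfying $x'^TAy'=e^Tp'$ and $x'^TBy'=f^Tq'$ (from the primal--dual pair (\ref{pd}) and the complementarity characterization (\ref{eq3})). This profile lies in exactly one hyper-cube and is feasible for the associated LP, so its optimum $(\tilde x,\tilde y,\tilde p,\tilde q)$ obeys $e^T\tilde p+f^T\tilde q\le e^Tp'+f^Tq'=x'^T(A+B)y'\le(1+\epsilon)^2\sum_i u_iv_i$. Combining this with the lower bound $\tilde x^T(A+B)\tilde y\ge\sum_i u_iv_i$ from (\ref{eq55}) gives $e^T\tilde p+f^T\tilde q-\tilde x^T(A+B)\tilde y\le\big(1-\tfrac{1}{(1+\epsilon)^2}\big)(e^T\tilde p+f^T\tilde q)$. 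The step I expect to be the only subtle point is converting this bound on the dual objective into the form of Definition \ref{de2}: writing $\tilde u=\max_{x\in X}x^TA\tilde y$ and $\tilde v=\max_{y\in Y}\tilde x^TBy$, weak duality in (\ref{pd}) yields $e^T\tilde p\ge\tilde u$ and $f^T\tilde q\ge\tilde v$, and a short manipulation with $\mu=1-\tfrac{1}{(1+\epsilon)^2}\in(0,1)$ (using that subtracting the nonnegative slack $e^T\tilde p+f^T\tilde q-\tilde u-\tilde v$ only helps) shows $\tilde u+\tilde v-\tilde x^T(A+B)\tilde y\le\mu(\tilde u+\tilde v)$, certifying $(\tilde x,\tilde y)$ as a relative $\mu$-approximate NE and completing the proof.
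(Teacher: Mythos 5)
Your proposal is correct and follows essentially the same route as the paper's own proof: the same geometric discretization of the $2k$ aggregates $x^T\alpha(i)$, $\beta(i)^Ty$, the same per-hypercube LP minimizing $e^Tp+f^Tq$, and the same correctness argument seeding from an exact NE and finishing via weak duality and the nonnegative-slack manipulation. No gaps to report.
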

\vspace{-0.1cm}

For a symmetric game ($B=A^T, E=F, e=f$), an (relative) $\epsilon$-approximate symmetric NE can be defined as an (relative)
$\epsilon$-approximate NE with the same strategies, {\em i.e.,} $x=y$. It is easy to check that, if we use the QP
formulation of (\ref{symQP}) instead of (\ref{eq4}) in any of these algorithms, then the output strategy is an (relative) $\epsilon$-approximate
symmetric NE strategy. 

\section{Games with a Low Rank Matrix}\label{slowrank}
In this section we show that if rank of even one payoff matrix ($A$ or $B$) is constant, then Nash equilibrium computation
can be done in polynomial time. Recall the best response polytopes $P$ and $Q$ (\ref{eq2}) for the bilinear game
$(A,B)$.

\begin{lemma}\label{le_lcp}
Given a game $(A,B)$, there exists a vertex pair $((y,p),(x,q))\in P \times Q$ such that $(x,y)$ is a NE of $(A,B)$.
\end{lemma}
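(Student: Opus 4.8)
The plan is to start from an arbitrary Nash equilibrium, which is guaranteed to exist, and then independently \emph{sparsify} its two certificates, pushing each to a vertex of its polytope in a way that only ever \emph{adds} tight inequalities, so that full labeling is preserved throughout. First I would invoke Proposition~\ref{existence} together with Lemma~\ref{ne_le}: there is a NE $(x^*,y^*)$, and hence a fully-labeled pair $((y^*,p^*),(x^*,q^*))\in P\times Q$ for suitable $p^*,q^*$. Write $L_1=L(y^*,p^*)$ and $L_2=L(x^*,q^*)$, so that by full labeling $L_1\cup L_2=\{1,\dots,M+N\}$.

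Next, let $P'$ be the face of $P$ obtained by turning every inequality of (\ref{eq2}) whose index lies in $L_1$ into an equality, and similarly let $Q'\subseteq Q$ be cut out by $L_2$. By construction $(y^*,p^*)\in P'$, so $P'$ is a \emph{non-empty} face of $P$; I would then select a vertex $v=(y,p)$ of $P'$. Since $v\in P'$, all constraints indexed by $L_1$ are tight at $v$, whence $L(v)\supseteq L_1$; moreover a vertex of a face of a polyhedron is a vertex of the whole polyhedron, so $v$ is a vertex of $P$. Proceeding identically on $Q'$ yields a vertex $w=(x,q)$ of $Q$ with $L(w)\supseteq L_2$. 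Then $L(v)\cup L(w)\supseteq L_1\cup L_2=\{1,\dots,M+N\}$, so $(v,w)$ is a fully-labeled \emph{vertex} pair, and Lemma~\ref{ne_le} gives that $(x,y)$ is a NE, proving the claim.

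The step that requires genuine care, and the main obstacle, is the existence of the vertices themselves: a non-empty face of a polyhedron contains a vertex only when the polyhedron is pointed (line-free), and $P,Q$ carry the \emph{free} dual variables $p,q$, which a priori could host a line. I would rule this out using the standing assumption that the rows of $E$ and $F$ are linearly independent, so that $E,F$ have full row rank. Indeed, a line $(y,p)+t(d_y,d_p)$ contained in $P$ forces $d_y=0$, since $y\ge 0$ must hold for all $t\in\Real$; the surviving constraints $A_iy-(p+td_p)^TE^i\le 0$ then hold for all $t$ only if $d_p^TE^i=0$ for every $i\in S_1$, i.e.\ $E^Td_p=0$. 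Full row rank of $E$ makes its columns span $\Real^{k_1}$, forcing $d_p=0$; hence $P$ is pointed, and symmetrically so is $Q$. Every non-empty face of each polytope therefore contains a vertex, which is exactly what the sparsification step needs.
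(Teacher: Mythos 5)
Your proof is correct and follows essentially the same route as the paper's: pass to the face of $P\times Q$ cut out by the inequalities that are tight at a known equilibrium (guaranteed by Proposition~\ref{existence} and Lemma~\ref{ne_le}), observe that every point of that face remains fully labeled, and pick a vertex of the face. The only difference is that the paper dismisses the vertex-existence issue with the remark that $P\times Q$ is ``bounded from one side,'' whereas you make that step rigorous by proving pointedness of $P$ and $Q$ from the full row rank of $E$ and $F$ --- a worthwhile elaboration, but the same underlying argument.
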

\begin{proof}
All the solutions of (\ref{eq3}) over $P\times Q$ are the NE of the game, and existence of a solution is guaranteed (Proposition
\ref{existence}). Suppose $v\in P\times Q$ is a NE of $(A,B)$. It is easy to check that the entire face, formed by the set of tight equations at $v$, is
solution, and it contains at least one vertex, as $P\times Q$ is bounded from one side.
\qed
\end{proof}

\begin{lemma}\label{poly_vert}
Let $k_1=k_2=k$ and $rank(A)=l$. The polytope $P$ has at most $O(N^{l+k})$ vertices.
\end{lemma}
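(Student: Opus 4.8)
The plan is to bound the number of vertices of $P$ by controlling how much freedom the inequalities involving $A$ actually have, once we account for $rank(A)=l$. Recall that
\[
P=\{(y,p)\in \Real^{N+k}\ |\ A_iy-p^TE^i\leq 0,\ \forall i\in S_1;\ y_j\geq 0,\ \forall j\in S_2;\ Fy=f\}.
\]
A vertex of $P$ lives in a space of dimension $N+k$, and at a vertex we have $N+k$ linearly independent tight constraints. There are $k$ equality constraints $Fy=f$ always tight, so a vertex is determined by choosing $N$ additional tight inequalities out of the $M$ ``payoff'' inequalities $A_iy=p^TE^i$ and the $N$ nonnegativity inequalities $y_j=0$. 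A crude count gives $\binom{M+N}{N}$, which is not polynomial in general; the whole point is to use $rank(A)=l$ to cut this down.

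First I would decompose according to how many of the $N$ chosen tight inequalities come from the payoff family versus the nonnegativity family. Suppose a vertex has exactly $r$ tight nonnegativity constraints $y_j=0$; then it has $N-r$ tight payoff constraints of the form $A_iy=p^TE^i$. The $r$ zero-coordinates of $y$ can be chosen in $\binom{N}{r}\le N^r\le N^N$ ways, which is still too large, so the genuine work is in bounding the number of distinct vertices compatible with a \emph{fixed} support pattern of $y$. The key observation is that the tight payoff rows $A_i$ enter only through the value $A_iy$, and since $rank(A)=l$, the vector $Ay$ ranges over an $l$-dimensional image as $y$ varies. The plan is to argue that, modulo the support choice of $y$, a vertex of $P$ is pinned down by $Ay$ (an $l$-dimensional object) together with the dual data $p$ (a $k$-dimensional object), so that distinct vertices sharing a support correspond to distinct cells of an arrangement in $l+k$ dimensions.

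Concretely, I would fix a basis and write $A=UV$ with $U\in\Real^{M\times l}$, $V\in\Real^{l\times N}$, so that $A_iy=U_i(Vy)$ depends on $y$ only through $t:=Vy\in\Real^l$. Then each payoff inequality becomes $U_i t - p^T E^i\le 0$, a halfspace in the $(t,p)\in\Real^{l+k}$ space. The $M$ payoff inequalities thus define an arrangement of $M$ halfspaces in $\Real^{l+k}$, whose number of vertices is $O(M^{l+k})$ by the standard bound on hyperplane arrangements; but since we want the count in terms of $N$ as stated, I would instead track the tight \emph{payoff constraints} chosen: a vertex selects a set of payoff rows to be tight, and the active-row combinatorics, once projected into the $(t,p)$-space, can realize at most $O\big(\binom{M}{l+k}\big)$ distinct vertices per support pattern. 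Combining with the $\binom{N}{r}$ support choices and summing over $r$ yields the claimed $O(N^{l+k})$ after bounding $M$ appropriately (or, in the intended reading, treating the payoff-row count as dominated by the $N^{l+k}$ term once the lower-dimensional image is used). The main obstacle I anticipate is exactly this bookkeeping: showing that the factorization through the $l$-dimensional image $Vy$ really does collapse the naive $\binom{M+N}{N}$ count down to a polynomial of degree $l+k$ in $N$, rather than retaining a spurious dependence on $M$. The cleanest route is to invoke the arrangement bound in the $(Vy,p)$ coordinates and then pull back, verifying that the pullback map does not create more vertices than the arrangement already has.
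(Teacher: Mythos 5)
Your structural instincts are right---the normals of the $M$ payoff inequalities are the rows of $[A\ {-}E^T]$, which has rank at most $l+k$, and factoring $A=UV$ lets you view those inequalities inside an $(l+k)$-dimensional $(Vy,p)$-space---but the argument as written does not reach the stated bound, and you say so yourself. Two concrete points. First, your treatment of the support count is inverted: you bound the number of choices of tight nonnegativity constraints by $\binom{N}{r}\le N^N$ and declare it too large, but the rank hypothesis is exactly what tames this factor. At a vertex, $N$ linearly independent tight inequalities are needed beyond $Fy=f$, and at most $l+k$ of them can be payoff inequalities (their normals all lie in the row space of $[A\ {-}E^T]$), so at least $N-(l+k)$ of the constraints $y_j=0$ are tight; hence the support of $y$ has size at most $l+k$ and the number of possible supports is $\sum_{s\le l+k}\binom{N}{s}=O(N^{l+k})$. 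Second---and this is the genuine gap---your per-support count keeps a factor of roughly $\binom{M}{l+k}$ for the choice of which payoff rows are tight, so what your argument actually certifies is $O\bigl((MN)^{l+k}\bigr)$, not $O(N^{l+k})$.

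The paper closes this hole differently: it asserts that because $\mathrm{rank}([A\ {-}E^T])\le l+k$, there is a fixed set $S\subset S_1$ of $l+k$ rows such that all other payoff inequalities are ``not needed'' in defining $P$, so the tight payoff rows may be drawn from $S$ alone, giving $\sum_{d}\binom{l+k}{d}\binom{N}{d}\le 2^{l+k}N^{l+k}$. You should treat that step with exactly the suspicion you already voiced: linear dependence of inequality normals does not imply redundancy of the inequalities (every edge normal of a convex $M$-gon lies in a $2$-dimensional space, yet every edge is a facet). Indeed, for a bimatrix-type instance $P=\{(y,p): Ay\le p\One_M,\ y\ge 0,\ \One_2^Ty=1\}$ with $N=2$ and $M$ generic rows, $P$ is the region above the upper envelope of $M$ affine functions on a segment and can have on the order of $M$ vertices, while $l+k=3$; so the $M$-dependence you could not eliminate is real, not a bookkeeping artifact. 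The bound you can honestly prove, $O\bigl((MN)^{l+k}\bigr)$, is still polynomial for constant $l$ and $k$ and therefore still supports the enumeration algorithm of Theorem~\ref{lowrank}; but do not expect the route through the $(Vy,p)$-arrangement to recover a count depending on $N$ alone.
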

\begin{proof}
From (\ref{eq2}), it is clear that $P$ is in $(N+k)$-dimensional Euclidean space, however $Fy=f$ gives $k$ linearly
independent equalities.  Therefore, $P$ is of dimension $N$, and at a vertex of $P$, $N$ linearly independent inequalities
must be tight. Since $A$ is of rank $l$, $rank([A$ -$E])\le l+k$.  Therefore, $\exists! S \subset S_1,\ |S|=l+k$ such that
$\forall i \in S_1\setminus S, A_iy - p^T E^i \le 0$ are not needed in defining the polytope $P$. 
At a vertex, if $d$ inequalities are tight from $S$ then rest $N-d$ must be of type $y_j=0$, hence for a fixed $D \subset
S,\ |D|=d$, there are at most ${N \choose N-d}={N\choose d}$ choices to form a vertex. Therefore, the total number of
vertices are at most $\sum_{i=0}^{l+k} {l+k \choose i}{N \choose i} \le 2^{l+k} N^{l+k}$. \qed
\end{proof}

Note that if we remove the assumption $k_1=k_2=k$ then the exponent of $N$ turns out to be a linear function of $l,k_1$ and
$k_2$. 
A similar proof can be worked out for $Q$. 

\begin{theorem}\label{lowrank}
If rank of either $A$ or $B$ is constant then a Nash equilibrium of a bilinear game $(A,B)$ can be computed in polynomial
time, assuming $k$ to be a constant.
\end{theorem}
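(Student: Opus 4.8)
The plan is to read the theorem off the two structural lemmas already established. Lemma \ref{le_lcp} guarantees that \emph{some} Nash equilibrium is realized by a vertex pair $((y,p),(x,q))\in P\times Q$, while Lemma \ref{poly_vert} guarantees that when $rank(A)$ and $k$ are constant the polytope $P$ has only polynomially many vertices. So I would enumerate all vertices of $P$ and, for each candidate, try to complete it to a \emph{fully-labeled} pair by solving a single linear feasibility problem over $Q$; by Lemma \ref{ne_le} any such pair yields a NE.

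Assume without loss of generality that $rank(A)=l$ is the constant one. If instead $rank(B)$ is the constant rank, I would apply the whole argument to the role-swapped game $(B^T,A^T)$ with strategy sets $Y,X$ (whose equilibria coincide with those of $(A,B)$ and whose first payoff matrix $B^T$ has rank $rank(B)$); equivalently, enumerate vertices of $Q$ and complete over $P$. Following the proof of Lemma \ref{poly_vert}, I would generate the $O(N^{l+k})$ vertices of $P$: there is a set $S\subseteq S_1$ of size $l+k$ of relevant first-player inequalities, and each vertex is obtained by choosing a subset $D\subseteq S$ to be tight together with $N-|D|$ of the bounds $y_j\ge 0$, solving the resulting linear system, and discarding infeasible or non-vertex candidates. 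This produces the full vertex list of $P$ in polynomial time.

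For each enumerated vertex $v=(y,p)$ I would read off its label set $L(v)\subseteq\{1,\dots,M+N\}$, where $i\in L(v)$ iff $A_iy=p^TE^i$ and $M+j\in L(v)$ iff $y_j=0$. To extend $v$ to a fully-labeled pair it suffices to find some $w=(x,q)\in Q$ with $L(w)\supseteq\{1,\dots,M+N\}\setminus L(v)$. Since in $Q$ (see (\ref{eq2})) a label $i\le M$ lies in $L(w)$ iff $x_i=0$ and a label $M+j$ lies in $L(w)$ iff $x^TB^j=q^TF^j$, this is exactly the linear feasibility problem
\[
\begin{array}{rl}
\mbox{find } (x,q) \mbox{ s.t.}: & Ex=e,\ x\ge 0,\ x^TB^j\le q^TF^j \ \ \forall j; \\
& x_i=0 \ \ \forall i\le M \mbox{ with } i\notin L(v); \\
& x^TB^j=q^TF^j \ \ \forall j \mbox{ with } M+j\notin L(v),
\end{array}
\]
which is solvable in time $poly(\mathcal L)$. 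If it is feasible, $(v,w)$ is fully-labeled, so $(x,y)$ is a NE by Lemma \ref{ne_le} and I output it.

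Correctness is immediate from Lemma \ref{le_lcp}: the NE vertex pair it guarantees has its first component among the enumerated vertices of $P$, and its second component is then a feasible solution of the associated feasibility LP, because the fully-labeled condition forces precisely the equalities imposed above. Hence the search succeeds on at least one enumerated vertex. The running time is the number of vertices $O(N^{l+k})$ times the cost of one LP, which is polynomial whenever $l$ and $k$ are constant, proving the theorem. The only point I expect to require care is the degenerate case: one must check that Lemma \ref{le_lcp} really places the needed vertex on the enumerated list and that imposing the complementary equalities never accidentally over-constrains the feasibility LP — but this is exactly what the fully-labeled characterization of Lemma \ref{ne_le} delivers, so no genuine obstacle arises.
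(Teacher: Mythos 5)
Your proposal is correct and follows essentially the same route as the paper: enumerate the polynomially many vertices of $P$ (Lemma \ref{poly_vert}), and for each vertex solve the linear feasibility problem over $Q$ that forces exactly the complementary labels, with correctness guaranteed by Lemmas \ref{ne_le} and \ref{le_lcp}. Your feasibility system is the same as the paper's (your conditions $i\notin L(v)\Rightarrow x_i=0$ and $M+j\notin L(v)\Rightarrow x^TB^j=q^TF^j$ are precisely the paper's $S_x$, $S_y$ conditions), and the role-swap for the case of constant $rank(B)$ matches the paper's remark.
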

\begin{proof}
Suppose $rank(A)=l$ (a constant) and $v=(y,p)\in P$ be a vertex. We can check in polynomial time whether $v$ corresponds to
a NE or not as follows.
Let $S_x=\{i \in S_1\ |\ A_iy-p^TE^i=0\}$ and $S_y=\{j \in S_2\ |\ y_j>0\}$. Consider
all $(x,q) \in \Real^{M+k}$
such that
\[Ex=e;\ \ 
\begin{array}{llll}
\forall j \in S_y,\ & x^TB^j-q^T F^j =0;  \hspace{.5cm}&\forall i \in S_x,\ & x_i \ge 0 \\
\forall j \notin S_y, & x^TB^j -q^T F^j \le 0; & \forall i \notin S_x, & x_i=0. 
\end{array}
\]
Every such $(x,q)$ lies in $Q$ and makes a fully-labeled pair with $v$, and hence forms a NE (Lemma \ref{ne_le}). 
Note that such an $(x,q)$ can be obtained in polynomial time by solving an LP. Now the proof follows from Lemmas
\ref{le_lcp} and \ref{poly_vert}. A similar argument proves the other case when $rank(B)$ is constant. \qed

\end{proof}

As the set of bimatrix games is a subclass of the bilinear games (Example \ref{bimatrix}), where $k_1=k_2=1$, Theorem
\ref{lowrank} strengthens the results by Lipton, Markakis and Mehta \cite{lipton2003simple} (Corollary $4$), and Kannan and
Theobald \cite{kannan09rank} (Theorem $3.2$), where they require that the rank of both $A$ and $B$ to be constants. 
Note that $k$ in Bayesian games depends on the number of types of players and in the sequence form, it depends on the number
of information sets of players. Therefore, this result can be applied to these games if in their bilinear representation, a
payoff matrix has low rank and $k$ is constant. 

In fact Theorem \ref{lowrank} gives a polynomial time algorithm to enumerate all the extreme equilibria of a bilinear game
with a constant rank matrix, and an exponential time enumeration algorithm for any bilinear game. A similar (exponential
time) algorithm was given by Avis et al. \cite{avis} to enumerate all Nash equilibria of a bimatrix game.

\section{Conclusion}
We have defined two-player {\em bilinear games}, where payoffs are represented by two matrices $(A,B)$ and strategy sets are
compact polytopes. 
In both bilinear and bimatrix games, the utilities are bilinear functions of strategy vectors.
The scope of these games is large enough to capture many interesting classes of games like bimatrix games,
two-player Bayesian games, polymatrix games, and two-player extensive-form games with perfect recall. 
Considering the rank-based hierarchy puts a structure on bilinear games, and by exploiting this structure and the similarity
between bilinear and bimatrix games, we extended various combinatorial
and algorithmic results, pertaining to the efficient computation of Nash equilibria, from bimatrix games to bilinear games.
It will be interesting to know what other results of bimatrix games extend to bilinear games like {\em 1)} designing
Lemke-Howson type algorithm for NE computation, {\em 2)} extending other algorithms for computation of approximate
equilibria, etc.\\

\noindent{\bf Acknowledgments. }We would like to thank Milind Sohoni for helpful comments and corrections. 

\bibliographystyle{splncs03}
\bibliography{Albert,sigproc}
\end{document}